\newif\ifnotes
\newcommand{\authnote}[3]{\textcolor{#3}{[{\footnotesize {\bf #1:} { {#2}}}]}}
\newcommand{\jnote}[1]{\ifnotes \authnote{J}{#1}{red} \fi}
\newcommand{\textdef}[1]{\textnormal{\textsf{#1}}}
\newtheorem{theorem}{Theorem}[section]
\newtheorem{proposition}[theorem]{Proposition}
\newtheorem{claim}[theorem]{Claim}
\newtheorem{lemma}[theorem]{Lemma}
\newtheorem{corollary}[theorem]{Corollary}
\newtheorem{fact}[theorem]{Fact}
\Crefname{importedtheorem}{Imported Theorem}{Imported Theorems}
\Crefname{theorem}{Theorem}{Theorems}
\Crefname{proposition}{Proposition}{Propositions}
\Crefname{claim}{Claim}{Claims}
\Crefname{lemma}{Lemma}{Lemmas}
\Crefname{conjecture}{Conjecture}{Conjectures}
\Crefname{corollary}{Corollary}{Corollaries}
\Crefname{construction}{Construction}{Constructions}
\Crefname{example}{Example}{Examples}
\Crefname{fact}{Fact}{Facts}
\Crefname{observation}{Observation}{Observations}
\Crefname{property}{Property}{Properties}
\newtheorem{definition}[theorem]{Definition}
\Crefname{definition}{Definition}{Definitions}
\Crefname{assumption}{Assumption}{Assumptions}
\Crefname{notation}{Notation}{Notations}
\theoremstyle{remark}
\Crefname{question}{Question}{Questions}
\Crefname{remark}{Remark}{Remarks}
\Crefname{comment}{Comment}{Comments}
\def\cG{{\cal G}}
\def\cQ{{\cal Q}}
\def\cR{{\cal R}}
\def\cU{{\cal U}}
\def\cV{{\cal V}}
\def\cW{{\cal W}}
\def\cX{{\cal X}}
\def\cY{{\cal Y}}
\def\bbF{{\mathbb F}}
\def\bbR{{\mathbb R}}
\def\bbZ{{\mathbb Z}}
\newcommand{\F}{\bbF}
\newcommand{\Z}{\bbZ}
\newcommand{\xor}{\oplus}
\newcommand{\transp}{\top}
\newcommand{\E}{\mathop{\mathbb{E}}}
\newcommand{\dtv}{d_{\mathsf{TV}}}
\newcommand{\dkl}{d_{\mathsf{KL}}}
\newcommand{\Supp}{\mathrm{Supp}}
\newcommand{\symdiff}{\Delta}
\def\binset{\{0,1\}}
\newcommand{\pmone}{\{\pm 1\}}
\newcommand{\eqdef}{\stackrel{\mathsf{def}}{=}}
\newcommand{\GHZ}{\mathsf{GHZ}}
\newcommand{\embed}{\mathsf{embed}}
\newcommand{\WIN}{\textsc{Win}}
\newcommand{\Unif}{\mathrm{Unif}}
\renewcommand{\vec}[1]{\mathbf{#1}}
\title{A Parallel Repetition Theorem for the GHZ Game} \author{Justin
  Holmgren\thanks{NTT Research.  E-mail:
    \href{mailto:justin.holmgren@ntt-research.com}{\texttt{justin.holmgren@ntt-research.com}}.
    Research conducted at Princeton University, supported in part by
    the Simons Collaboration on Algorithms and Geometry and NSF grant
    No. CCF-1714779.}  \and Ran Raz\thanks{Department of Computer
    Science, Princeton University. E-mail: \href{mailto:ranr@cs.princeton.edu}{\texttt{ranr@cs.princeton.edu}}.  Research supported by the Simons
    Collaboration on Algorithms and Geometry, by a Simons Investigator
    Award and by the National Science Foundation grants
    No. CCF-1714779, CCF-2007462.}}
\begin{document}
\maketitle

\begin{abstract}
We prove that parallel repetition of the (3-player) GHZ game reduces the value of the game polynomially fast to~0. That is, the value of the GHZ game repeated in parallel $t$ times is at most $t^{-\Omega(1)}$. Previously, only a bound of 
$\approx \frac{1}{\alpha(t)}$, where $\alpha$ is the inverse Ackermann function, was known~\cite{Verbitsky96}.

The GHZ game was recently identified by Dinur, Harsha, Venkat and Yuen
as a multi-player game where all existing techniques for proving
strong bounds on the value of the parallel repetition of the game
fail. Indeed, to prove our result we use a completely new proof
technique. Dinur, Harsha, Venkat and Yuen speculated that progress on
bounding the value of the parallel repetition of the GHZ game may lead
to further progress on the general question of parallel repetition of
multi-player games. They suggested that the strong correlations
present in the GHZ question distribution represent the ``hardest
instance'' of the multi-player parallel repetition
problem~\cite{DinurHVY17}.

Another motivation for studying the parallel repetition of the GHZ
game comes from the field of quantum information. The GHZ game, first
introduced by Greenberger, Horne and Zeilinger~\cite{Greenberger1989},
is a central game in the study of quantum entanglement and has been
studied in numerous works. For example, it is used for testing quantum
entanglement and for device-independent quantum cryptography. In such
applications a game is typically repeated to reduce the probability of
error, and hence bounds on the value of the parallel repetition of the
game may be useful.
\end{abstract}

\clearpage
\newif\ifwip
\wipfalse
\newif\ifnotwip
\ifwip\notwipfalse\else\notwiptrue\fi

\tableofcontents
\clearpage

\section{Introduction}
 In a {$k$-player game}, players are given
  correlated ``questions'' $(q_1, \ldots, q_k)$ sampled from a
  distribution $Q$ and must produce corresponding ``answers''
  $(a_1, \ldots, a_k)$ such that
  $(q_1, \ldots, q_k, a_1, \ldots, a_k)$ satisfy a fixed predicate~$\pi$.  Crucially, the players are not allowed to communicate
  amongst themselves after receiving their questions (but they may
  agree upon a strategy beforehand).  The \textdef{value} of the game
  is the probability with which the players can win with an optimal
  strategy.
  %
  %
  Multi-player games play a central role in theoretical computer
  science due to their intimate connection with multi-prover
  interactive proofs (MIPs)~\cite{Ben-OrGKW88}, hardness of
  approximation~\cite{FeigeGLSS91}, communication complexity~\cite{ParnafesRW97, Bar-YossefJKS04}, and the EPR paradox and non-local games~\cite{EPR,CleveHTW04} 



  One basic operation on multi-player games is \textdef{parallel
    repetition}.  In the $t$-wise parallel repetition of a game,
  question tuples $(q^{(i)}_1, \ldots, q^{(i)}_k)$ are sampled
  independently for $i \in [t]$. The $j^{th}$ player is given
  $(q^{(1)}_j, \ldots, q^{(t)}_j)$, and is required to produce
  $(a^{(1)}_j, \ldots, a^{(t)}_j)$.  The players win if for every
  $i \in [t]$, $(a^{(i)}_1, \ldots, a^{(i)}_k)$ is a winning answer
  for questions $(q^{(i)}_1, \ldots, q^{(i)}_k)$.  Parallel repetition
  was first proposed in \cite{FortnowRS94} as an intuitive attempt to
  reduce the value of a game from $\epsilon$ to $\epsilon^t$, but in
  general this is not what happens~\cite{Fortnow89,Feige91,FeigeV96,Raz11}.  The
  actual effect is far more subtle and a summary of some of the known results is
  given in \cref{table:known-results}.

  \newcommand{\numreps}{t}
  \begin{table}[h]
    \centering
    \begin{tabular}{ r | c  c }
      & Two-player games & $\ge 3$-player games \\
      \hline
      Classical & $\exp(-\Omega (t ))$~\cite{Raz98} & $O \left
                                                      (\frac{1}{\alpha(\numreps)}
                                                      \right )$~\cite{Verbitsky96} \\
      Entangled & $t^{-\Omega(1)}$~\cite{Yuen16} & $O(1)$ (trivial) \\
      Non-Signaling & $\exp \left ( - \Omega \left (t \right )\right ) $~\cite{Holenstein09} & $\Omega(1)$~\cite{HolmgrenY19}\\
    \end{tabular}
    \caption{Known bounds on the worst-case (slowest) decay for various
      values of the $t$-wise parallel repetition of a non-trivial game.
      $\alpha$ denotes the inverse Ackermann function.}
    \label{table:known-results}
  \end{table}

  Much less is known about games with three or more players than about
  two-player games.  Only very weak bounds are known on how $t$-wise
  parallel repetition decreases the value of a three-player game (as a
  function of $t$).  There is a similar gap in our understanding when
  players are allowed to share entangled state; in fact, no bounds
  here are known whatsoever in the three-player case.  If players are
  more generally allowed to use any no-signaling strategy, then there
  are in fact counterexamples (lower bounds) showing that parallel
  repetition may utterly fail to reduce the (no-signaling) value of a
  three-player game.

\subsection{The GHZ Game}
The GHZ game, which we will denote by $\cG_{\GHZ}$, is a three-player
game with query distribution $Q_{\GHZ}$ that is uniform on
$\{x \in \F_2^3 : x_1 + x_2 + x_3 = 0\}$.  To win, players are
required on input $(x_1, x_2, x_3)$ to produce $(y_1, y_2, y_3)$ such
that $y_1 \xor y_2 \xor y_3 = x_1 \lor x_2 \lor x_3$.  It is easily
verified that the value of $\cG_{\GHZ}$ is $3/4$.

Dinur et al.~\cite{DinurHVY17} identified the GHZ game as a simple
example of a game for which we do not know exponential decay bounds,
writing
\begin{quote}
  \emph{``We suspect that progress on bounding the value of the parallel
    repetition of the GHZ game will lead to further progress on the general
    question.''}
\end{quote}
and
\begin{quote}
 \emph{``We believe that the strong correlations present in the GHZ question distribution represent the ``hardest instance'' of the multiplayer parallel repetition problem. Existing techniques from the two-player case (which we leverage in this paper) appear to be incapable of analyzing games with question distributions with such strong correlations.''}
\end{quote}

The GHZ game also plays an important role in quantum information
theory and in particular in entanglement testing and
device-independent quantum cryptography.  Its salient properties are
that it is an XOR game for which quantum (entangled) players can play
perfectly, but classical players can win only with probability
strictly less than $1$~\cite{MillerS13}.  No such \emph{two}-player
game is known.  Moreover, the GHZ game has the so called, self testing
property, that all quantum strategies that achieve value~1 are
essentially equivalent. This property is important for entanglement
testing and device-independent quantum cryptography.

Prior to our work, the best known parallel repetition bound for the GHZ game was
due to Verbitsky~\cite{Verbitsky96}, who observed a connection between parallel
repetition and the density Hales-Jewett theorem from Ramsey
theory~\cite{Furstenberg1991}.  Using modern quantitative versions of this
theorem~\cite{Polymath2012}, Verbitsky's result implies a bound of approximately
$\frac{1}{\alpha(t)}$, where $\alpha$ is the inverse Ackermann function.  

We
prove a bound of $t^{-\Omega(1)}$.


\section{Technical Overview}
To prove our parallel repetition theorem for the GHZ game we show that for an
arbitrary strategy, even if we condition on that strategy winning in several
coordinates $i_1, \ldots, i_m$, there still exists some coordinate in which that
strategy loses with significant probability.  We consider the finer-grained
event that also specifies specific queries and answers in coordinates
$i_1, \ldots, i_m$, and abstract it out as a sufficiently dense product event
$E$ over the three players' inputs.

Given an arbitrary product event $E$ that occurs with sufficiently high
probability, we show that some coordinate of $\tilde{P} \eqdef P | E$ is hard.  We do
this in three high-level steps:
\begin{enumerate}
\item We first prove this for the simpler case in which $E$ is an affine
  subspace of $\F_2^{3 \times n}$.  In fact, we show in this case that
  \emph{many} coordinates of $\tilde{P}$ are hard.
\item We then prove that when $E$ is arbitrary, $\tilde{P}$ can be written as a
  convex combination of components $\tilde{P} | \cW$, where $\cW$ is a large
  affine subspace, with most such components ``indistinguishable'' from
  $P | \cW$.  Specifically, our main requirement is that for all sufficiently
  compressing linear functions $\phi$ on $\cW$, the KL divergence of
  $\phi(\tilde{X})$ from $\phi(X)$ is small, where we sample
  $\tilde{X} \gets \tilde{P}|\cW$ and $X \gets P | \cW$.
\item With this notion of indistinguishability, we prove that if $\tilde{P}|\cW$
  is indistinguishable from $P | \cW$, then the GHZ game (or any game with a
  constant-sized answer alphabet) is roughly as hard in every coordinate with
  query distribution $\tilde{P}|\cW$ as with $P | \cW$.
\end{enumerate}
We conclude that for many coordinates $i$, there is a significant
portion of $\tilde{P}$ for which the $i^{th}$ coordinate is hard.  We
emphasize that unlike all previous parallel repetition bounds, our
proof does \emph{not} construct a local embedding of $Q_{\GHZ}$ into
$\tilde{P}$ for general $E$.


\paragraph{Local Embeddability in Affine Subspaces}
We first show that if $E$ is any affine subspace of sufficiently low codimension
$m$ in $\F_2^{3 \times n}$, then there exist many coordinates $i \in [n]$ for which
$Q_{\GHZ}$ is locally embeddable in the $i^{th}$ coordinate of the conditional
distribution $\tilde{P}$.  In fact, it will suffice for us to consider only
affine ``power'' subspaces, i.e. of the form $w + \cV^3$ for some linear
subspace $\cV$ in $\F_2^n$ and vector $w \in \F_2^{3 \times n}$. Let
$X^1, \ldots, X^n \in \F_2^3$ denote the queries in each of the $n$ repetitions.


Our observation is that when $E$ is affine there exists a subset of
coordinates $S \subseteq [n]$ with $|S| \ge 2$ such that for any $i' \in S$, $E$
depends on $(X^i)_{i \in S}$ only via the differences
$(X^{i'} - X^i)_{i' \in S \setminus \{i\}}$.  Indeed, if
$E = E_1 \times E_2 \times E_3$ and if each $E_j$ is given by an affine equation
$(X^1_j, \ldots, X^n_j) \cdot A = b_j$ for a sufficiently ``skinny'' matrix $A$,
then by the pigeonhole principle there must exist two distinct subset row-sums
of $A$ with equal values.  By considering the symmetric difference of these
subsets, and using the fact that we are working over $\F_2$, there is a set
$S \subseteq [n]$ such that the $S$-subset row-sum of $A$ is $0$.  Thus the
value of $(X^1_j, \ldots, X^n_j) \cdot A$ is unchanged if $X^i_j$ is subtracted
from $X^{i'}_j$ for every $i \in S$.

As a result, the players can all sample
$(X^{i'} - X^i)_{i' \in S \setminus \{i\}}$ and $(X^{i'})_{i' \notin S}$, which
are independent of $X^i$, using shared randomness.  On input $X^i_j$, the
$j^{th}$ player can locally compute $(X^{i'}_j)_{i' \in S}$ from $X^i_j$ and
$(X^{i'} - X^i)$.

\paragraph{Pseudo-Affine Decompositions}
At a high level, we next show that if $E$ is an arbitrary product event (with
sufficient probability mass) then $\tilde{P}$ has a ``pseudo-affine
decomposition''.  That is, there is a partition $\Pi$ of $(\F_2^n)^3$ into
affine subspaces such that if $\cW$ is a random part of $\Pi$ (as weighted by
$\tilde{P}$), then any strategy for $\tilde{P}|\cW$ can be extended to a
strategy for $P|\cW$ that is similarly successful in expectation.

To construct $\Pi$, we prove the following
sufficient conditions for $\Pi$ to be a pseudo-affine decomposition:
\begin{itemize}
\item When $\cW$ is a random part of $\Pi$ (as weighted by $\tilde{P}$), the
  distributions $\tilde{P}|\cW$ and $P|\cW$ are indistinguishable to
  all sufficiently compressing linear distinguishers.  That is, if $\cW$ is an
  affine shift of $\cV^3$, then for all subspaces $\cU \le \cV$ of sufficiently
  small co-dimension, the distributions $\tilde{P}|\cW$ and $P |\cW$ are
  statistically close modulo $\cU^3$.
\item Each part $\cW$ of $\Pi$ is in fact an affine shift of a \emph{product}
  space $\cV^3$ for some linear space $\cV$.
\end{itemize}

We construct $\Pi$ satisfying these conditions iteratively.  Starting with the
singleton partition, as long as a random part $\cW$ of $\Pi$ has some subspace
$\cU$ for which $\tilde{P}|\cW$ and $P|\cW$ are distributed differently mod
$\cU^3$, we replace each part $\cW$ of $\Pi$ by all the affine shifts of $\cU^3$
in $\cW$.  We show that this process cannot be repeated too many times when $E$
has sufficient density.


\paragraph{Pseudorandomness Preserves Hardness}
The high-level reason these conditions suffice is because for any strategy
$f = f_1 \times f_2 \times f_3$, they enable us to refine $\Pi$ to a partition
$\Pi'_f$ such that when $X$ is sampled from $\tilde{P}|\cW'$ for a random part
$\cW'$ in $\Pi'_f$, the distribution of $f(X)$ is as if $X$ were sampled
\emph{uniformly} from $\cW' \cap E$ (i.e. with $X_1$, $X_2$, and $X_3$ mutually
independent).  Moreover, when we construct $\Pi'_f$ we partition each part $\cW$
of $\Pi$ into all affine shifts of some linear space $\cU^3$ where the
codimension of $\cU^3$ in $\cW$ is not too large.  Thus the strategy $f$ on
$\tilde{P}|\cW$ effectively has the players acting as independent (randomized)
functions of their inputs modulo $\cU$.  Such strategies generalize to
$P | \cW$ by the first property of pseudo-affine decompositions stated
above.

To construct $\Pi'_f$, we ensure that $f_1$ is uncorrelated with every affine
function on $\tilde{P}|\cW'$ when $\cW'$ is a random part of $\Pi'_f$, and then
prove the desired independence by Fourier analysis.  We construct $\Pi'_f$ by
iterative refinement of $\Pi$.  Start by considering a random part $\cW$ of
$\Pi$.  Whenever $f(X_1)$ is correlated with an affine $\F_2$-valued function
$\chi$, replace $\cW$ in $\Pi$ by $\cW \cap \chi^{-1}(0)$ and
$\cW \cap \chi^{-1}(1)$, and do this in parallel for all parts of $\Pi$.  We
show that this cannot be repeated too many times, and thus we quickly arrive at
our desired $\Pi'_F$.



\section{Preliminaries}
In this section we describe some preliminary definitions that are somewhat
specific to this work.  More standard preliminaries are given in
\cref{sec:probability,sec:fourier}.
\subsection{Set Theory}
\begin{definition}\label{def:partition}
  For any set $S$, a \textdef{partition of $S$} is a pairwise disjoint set of
  subsets of $S$, whose union is all of $S$.
\end{definition}
If $\Pi$ is a partition of $S$ and $x$ is an element of $S$, we write $\Pi(x)$
to denote the (unique) element of $\Pi$ that contains $x$.  

\subsection{Linear Algebra}
If $U$ is a linear subspace of $V$, we write $U \le V$ rather than
$U \subseteq V$ to emphasize that $U$ is a subspace rather than an unstructured
subset.  

We crucially rely on the Cauchy-Schwarz inequality:
\begin{definition}[Inner Product Space]
  A \textdef{real inner product space} is a vector space $V$ over $\bbR$
  together with an operation $\langle \cdot, \cdot \rangle : V \times V \to
  \bbR$ satisfying the following axioms for all $x, y, z \in V$:
  \begin{itemize}
  \item Symmetry: $\langle x, y \rangle = \langle y, x \rangle$.
  \item Linearity in the first\footnote{Because of symmetry, this implies also
      linearity in the second argument, aka bilinearity.} argument:
    $\langle ax + by, z \rangle = a \langle x, z \rangle + b \langle y, z
    \rangle$.
  \item Positive Definiteness: $\langle x, x \rangle > 0$ if $x \neq 0$.
  \end{itemize}
\end{definition}
\begin{theorem}[Cauchy-Schwarz]
  In any inner product space, it holds for all vectors $u$ and $v$ that
  $| \langle u, v \rangle |^2 \le \langle u, u \rangle \cdot \langle v, v
  \rangle$.
\end{theorem}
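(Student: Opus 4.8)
The plan is to run the classical ``non-negative quadratic'' argument. First I would dispose of the degenerate case: if $v = 0$, then linearity in the first argument gives $\langle v, v \rangle = \langle 0 \cdot v, v \rangle = 0 \cdot \langle v, v \rangle = 0$, and likewise $\langle u, v \rangle = \langle 0 \cdot v, u \rangle = 0$ by symmetry and linearity; hence both sides of the claimed inequality are $0$ and there is nothing to prove. (The case $u = 0$ is symmetric.) Note also that $\langle 0, 0 \rangle = 0$ by the same computation, so together with positive definiteness we have $\langle x, x \rangle \ge 0$ for \emph{all} $x$, which is what the argument below actually needs.

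Now assume $v \neq 0$, so that $\langle v, v \rangle > 0$ by positive definiteness. For every real scalar $t$, the vector $u - tv$ satisfies $\langle u - tv, u - tv \rangle \ge 0$. Expanding this using symmetry and bilinearity — here I would first remark that linearity in the first argument together with symmetry yields linearity in the second argument as well, so the expansion is justified — gives
\[
\langle u, u \rangle - 2 t \langle u, v \rangle + t^2 \langle v, v \rangle \ge 0
\]
for all $t \in \bbR$. This is a real quadratic in $t$ with positive leading coefficient $\langle v, v \rangle$ that is nonnegative everywhere, so its discriminant is nonpositive:
\[
\bigl( 2 \langle u, v \rangle \bigr)^2 - 4 \langle v, v \rangle \langle u, u \rangle \le 0 ,
\]
which rearranges exactly to $|\langle u, v \rangle|^2 \le \langle u, u \rangle \cdot \langle v, v \rangle$. (Alternatively, instead of invoking the discriminant one can substitute the minimizing choice $t = \langle u, v \rangle / \langle v, v \rangle$ into the displayed inequality and simplify; this yields the same bound and avoids quoting any fact about quadratics.)

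I do not anticipate a genuine obstacle here — this is a textbook fact and the proof is a few lines. The only points that require a little care are (i) the degenerate case $v = 0$, where positive definiteness gives no leverage and one must instead note via bilinearity that both sides vanish, and (ii) being explicit that ``linearity in the first argument plus symmetry'' supplies the full bilinearity used when expanding $\langle u - tv, u - tv \rangle$. Everything else is routine algebra.
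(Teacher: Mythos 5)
Your proof is correct: the degenerate case is handled properly (including the observation that $\langle x, x \rangle \ge 0$ for all $x$, which the nonnegativity of the quadratic requires), the bilinearity used in expanding $\langle u - tv, u - tv \rangle$ is justified from symmetry plus linearity in the first argument exactly as the paper's footnote indicates, and the discriminant (or minimizing-$t$) step is the standard argument. The paper itself states Cauchy--Schwarz without proof, as a quoted classical fact, so there is no authorial proof to compare against; your textbook argument fills that in correctly and nothing further is needed.
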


\subsection{Multi-Player Games}
In parallel repetition we often work with Cartesian product sets of the form
$\cX = (\cX_1 \times \cdots \times \cX_k)^n$.  For these sets, we will use
superscripts to index the outer product and subscripts to index the inner
product.  That is, we view elements $x$ of $\cX$ as tuples $(x^1, \ldots, x^n)$,
where the $i^{th}$ component of $x^j$ is $x^j_i$.  We will also write $x_i$ to
denote the vector $(x^1_i, \ldots, x^n_i)$.  If
$\{E_i \subseteq \cX_i\}_{i \in [k]}$ is a collection of subsets indexed by
subscripts, we write $E_1 \times \cdots \times E_k$ or $\prod_{i \in [k]} E_i$
to denote the set $\{x \in \cX :\, \forall i \in [k], x_i \in E_i\}$.
Similarly, if $\cY$ is a product set $(\cY_1 \times \cdots \times \cY_k)^m$, we
say $f : \cX \to \cY$ is a product function $f_1 \times \cdots \times f_k$ if
$f(x) = y$ for $y_i = f_i(x_i)$.

\begin{definition}[Multi-player Games]
  A \textdef{$k$-player game} is a tuple $(\cX, \cY, P, W)$, where
  $\cX = \cX_1 \times \cdots \times \cX_k$ and
  $\cY = \cY_1 \times \cdots \times \cY_k$ are finite sets, $P$ is a probability
  measure on $\cX$, and $W : \cX \times \cY \to \{0,1\}$ is a ``winning
  probability'' predicate.
\end{definition}

\begin{definition}[Parallel Repetition]
  Given a $k$-player game $\cG = (\cX, \cY, Q, W)$, its \textdef{$n$-fold parallel
    repetition}, denoted $\cG^n$, is defined as the $k$-player game
  $(\cX^n, \cY^n, Q^n, W^n)$, where $W^n(x,y) \eqdef \bigwedge_{j=1}^{n} W(x^{j}, y^{j})$.
\end{definition}

\begin{definition}
  The \textdef{success probability} of a function
  $f = f_1 \times \cdots f_k : \cX \to \cY$ in a $k$-player game $\cG = (\cX, \cY, Q,
  W)$ is
  \[
    v[f](\cG) \eqdef \Pr_{x \gets Q} \Big [ W \big ((x, f(x)\big ) = 1 \Big ].
  \]
\end{definition}
\begin{definition} The \textdef{value} of a $k$-player game
  $\cG = (\cX, \cY, Q, W)$, denoted $v(\cG)$, is the maximum, over all functions
  $f = f_1 \times \cdots \times f_k : \cX \to \cY$, of
  $v[f](\cG)$.
\end{definition}
\begin{fact}
  \label{fact:randomized-value}
  Randomized strategies are no better than deterministic strategies.
  \jnote{TODO be more formal}
\end{fact}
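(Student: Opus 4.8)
The plan is to first pin down a formal definition of a randomized strategy, since the statement as written is informal, and then dispatch the claim by a one-line averaging argument. I would define a \emph{randomized strategy} for a $k$-player game $\cG = (\cX, \cY, Q, W)$ to be a probability space $(\Omega, \mu)$ together with functions $f_j : \cX_j \times \Omega \to \cY_j$ for each $j \in [k]$: on shared random string $\omega \gets \mu$ and input $x$, the players output $f(x, \omega) \eqdef (f_1(x_1, \omega), \ldots, f_k(x_k, \omega))$, and the success probability is $v[f](\cG) \eqdef \Pr_{x \gets Q,\, \omega \gets \mu}[W(x, f(x, \omega)) = 1]$. (Private randomness is the special case where $\mu$ is a product measure, so it suffices to treat shared randomness.) The goal is then to show $v[f](\cG) \le v(\cG)$ for every such $f$.

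The key observation I would use is that for each \emph{fixed} $\omega \in \Omega$, the map $x \mapsto f(x, \omega)$ is a deterministic \emph{product} strategy: player $j$'s output $f_j(x_j, \omega)$ depends only on $x_j$. Hence $v[f(\cdot, \omega)](\cG) \le v(\cG)$ by the definition of $v(\cG)$ as the maximum of $v[\cdot](\cG)$ over deterministic product strategies $\cX \to \cY$ (a maximum that is attained since $\cX$ and $\cY$ are finite). The proof then concludes by writing
\[
v[f](\cG) = \E_{\omega \gets \mu}\Big[ \Pr_{x \gets Q}[W(x, f(x, \omega)) = 1] \Big] = \E_{\omega \gets \mu}\big[ v[f(\cdot, \omega)](\cG) \big] \le \E_{\omega \gets \mu}\big[ v(\cG) \big] = v(\cG),
\]
where the first equality is Fubini/linearity of expectation. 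In particular this also shows that some fixing $\omega^\ast$ of the shared randomness yields a deterministic strategy with $v[f(\cdot, \omega^\ast)](\cG) \ge v[f](\cG)$, which is the ``is no better than'' content.

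The only point needing the slightest care is the interchange of $\E_\omega$ with $\Pr_x$, together with measurability of $\omega \mapsto v[f(\cdot,\omega)](\cG)$; I would sidestep both by noting that one may take $\Omega$ finite without loss of generality, since there are only finitely many deterministic product strategies $\cX \to \cY$ and $\mu$ may be pushed forward to a distribution on that finite set without changing $v[f](\cG)$. I do not expect any real obstacle here — the entire content is the remark that fixing shared randomness preserves the product structure of the strategy — so the ``main obstacle'' is really just choosing a clean formalization of randomized strategies that meshes with how the rest of the paper uses them.
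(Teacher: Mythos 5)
Your proof is correct: the averaging argument over fixings of the shared randomness (which preserves the product structure of the strategy) is exactly the standard argument this fact relies on, and the paper itself leaves it unproven with a ``TODO'' note. Your extra care about formalizing randomized strategies and reducing to a finite randomness space is fine but not a substantive departure.
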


\begin{definition}[Value in $j^{th}$ coordinate]
  If $\cG = (\cX, \cY, Q, W^n)$ is a game (with a product winning predicate),
  the \textdef{value of $\cG$ in the $j^{th}$ coordinate}, denoted $v^j(\cG)$,
  is the value of the game $(\cX, \cY, Q, W')$, where $W'(x, y) = W(x^i, y^i)$.
\end{definition}

\begin{definition}[Game with Modified Query Distribution]
  If $\cG = (\cX, \cY, Q, W)$ is a game, and $P$ is a probability measure on $\cX$, we write
  $\cG | P$ to denote the game $(\cX, \cY, P, W)$.
\end{definition}


\section{Key Lemmas}
In this section, we give some Fourier-analytic conditions (see
\cref{sec:fourier} for the basics of Fourier analysis) that imply independence
of random variables under the (parallel repeated) GHZ query distribution.

It will be convenient for us to work with probability distributions in terms of
their \emph{densities} (see \cref{sec:probability} for basic probability
definitions and notation).
\begin{definition}[Probability Densities]
  If $P : \Omega \to \bbR$ is a probability distribution with support contained
  in $A$, then the \textdef{density of $P$ in $A$} is
\begin{align*}
  \varphi : A & \to \bbR \\
  x & \mapsto |A| \cdot P(x).
\end{align*}
If $A$ is unspecified, then by default it is taken to be $\Omega$.
  %
\end{definition}

\begin{lemma}
  \label{lemma:fourier-formula}
  Let $\cV$ be a (finite) vector space over $\F_2$, let $P$ be uniform on
  $\{x \in \cV^3 : x_1 + x_2 + x_3 = 0\}$, and let $U$ be uniform on $\cV^3$.

  For any subset $E = E_1 \times E_2 \times E_3$ of $\cV^3$,
  \[
    P(E) = \sum_{\chi \in \hat{\cV}} \prod_{i \in [3]} \hat{1}_{E_i}(\chi) =  U(E) \cdot \sum_{\chi \in \hat{\cV}} \prod_{i \in [3]} \hat{\varphi}_{E_i}(\chi),
  \]
  where $\varphi_{E_i}$ denotes the density in $\cV$ of the uniform distribution
  on $E_i$.
\end{lemma}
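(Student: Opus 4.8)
The plan is to expand $P(E)$ using Fourier analysis over the group $\cV^3$, exploiting the fact that $P$ is (a scalar multiple of) the indicator of the subgroup $H \eqdef \{x \in \cV^3 : x_1 + x_2 + x_3 = 0\}$. First I would write $P(E) = \sum_{x \in E} P(x) = \frac{1}{|H|} \sum_{x \in \cV^3} 1_E(x) \cdot 1_H(x)$, since $P$ is uniform on $H$ and $|H| = |\cV|^2$. Because $E = E_1 \times E_2 \times E_3$, the indicator factors as $1_E(x) = 1_{E_1}(x_1) 1_{E_2}(x_2) 1_{E_3}(x_3)$. The key identity is the Fourier expansion of $1_H$: a standard computation shows $1_H(x) = \frac{1}{|\cV|}\sum_{\chi \in \hat{\cV}} \chi(x_1)\chi(x_2)\chi(x_3)$, since for fixed $x$ the inner sum over characters is $|\cV|$ when $x_1 + x_2 + x_3 = 0$ and $0$ otherwise (using that $\chi(x_1)\chi(x_2)\chi(x_3) = \chi(x_1+x_2+x_3)$ over $\F_2$, and orthogonality of characters). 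I would double-check the normalization conventions against \cref{sec:fourier}, since the precise constants depend on whether $\hat{1}_{E_i}$ is defined with a $\frac{1}{|\cV|}$ in front; the statement's two expressions differ exactly by the factor $U(E) = |E_1||E_2||E_3| / |\cV|^3$ and the relation $\varphi_{E_i} = \frac{|\cV|}{|E_i|} 1_{E_i}$, so I will make sure these bookkeeping factors line up.

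Next I would substitute this into the sum, interchange the order of summation, and factor:
\begin{align*}
  P(E) &= \frac{1}{|H|} \cdot \frac{1}{|\cV|} \sum_{\chi \in \hat{\cV}} \sum_{x \in \cV^3} \prod_{i \in [3]} 1_{E_i}(x_i) \chi(x_i) \\
  &= \frac{1}{|\cV|^3} \sum_{\chi \in \hat{\cV}} \prod_{i \in [3]} \Big( \sum_{x_i \in \cV} 1_{E_i}(x_i)\chi(x_i) \Big).
\end{align*}
Here each inner sum $\sum_{x_i \in \cV} 1_{E_i}(x_i)\chi(x_i)$ is, up to the normalization fixed in \cref{sec:fourier}, exactly $\hat{1}_{E_i}(\chi)$, which yields the first claimed equality $P(E) = \sum_{\chi} \prod_{i} \hat{1}_{E_i}(\chi)$. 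For the second equality I would replace $1_{E_i}$ by $\frac{|E_i|}{|\cV|}\varphi_{E_i}$ (the density of the uniform distribution on $E_i$), pulling out the constant $\prod_i \frac{|E_i|}{|\cV|} = \frac{|E_1||E_2||E_3|}{|\cV|^3} = U(E)$, and observing that the remaining $\hat{\varphi}_{E_i}(\chi)$ is precisely the Fourier coefficient of the density.

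The only real subtlety — and where I would be most careful — is aligning the Fourier normalization conventions: whether the transform $\hat{f}$ carries a $\frac{1}{|\cV|}$ (expectation) normalization or a counting normalization, and correspondingly whether $\chi(x_i)$ should be written $(-1)^{\langle \chi, x_i\rangle}$. The three factors of $\frac{1}{|\cV|}$ I accumulated (one from $1/|\cV|$ in the expansion of $1_H$, noting $|H||\cV| = |\cV|^3$) must be exactly absorbed into the definition of $\hat{1}_{E_i}$ used in \cref{sec:fourier}; if that definition uses expectation-normalization then $\hat{1}_{E_i}(\chi) = \frac{1}{|\cV|}\sum_{x_i} 1_{E_i}(x_i)\chi(x_i)$ and everything matches immediately. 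I do not expect any genuine mathematical obstacle — this is a clean orthogonality computation — but it is essential that the constants be verified against the paper's stated conventions rather than guessed.
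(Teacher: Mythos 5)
Your proposal is correct and is essentially the paper's argument: the paper applies Plancherel over $\cV^3$ together with the fact that $\hat{\varphi}_P(\chi_1,\chi_2,\chi_3)$ is $1$ exactly when $\chi_1=\chi_2=\chi_3$ and $0$ otherwise, which is the same character-orthogonality computation you perform by expanding $1_H$ directly and interchanging sums; your normalization bookkeeping (expectation-normalized $\hat{1}_{E_i}$, the factor $U(E)=\prod_i |E_i|/|\cV|$, and $\varphi_{E_i}=\tfrac{|\cV|}{|E_i|}1_{E_i}$) matches the paper's conventions. The only difference is cosmetic: you prove the key Fourier identity for the subgroup indicator explicitly rather than invoking Plancherel and quoting the formula for $\hat{\varphi}_P$, which if anything makes the argument slightly more self-contained.
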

\begin{proof}
  Let $\varphi_P$ denote the density in $\cV^3$ of $P$.  That is,
  \[
    \varphi_P(x_1, x_2, x_3) =
    \begin{cases}
      |\cV| & \text{if $x_1 + x_2 + x_3 = 0$} \\
      0 & \text{otherwise.}
    \end{cases}
  \]

  Then
  \begin{align}
    \notag P(E) &= \E_{x \gets \cV^3} \left [ \varphi_P(x) \cdot
    1_{E}(x) \right ]\\
    \label{eq:fourier-formula} &= \sum_{\chi \in \widehat{\cV^3}} \hat{\varphi}_P(\chi) \cdot
                                 \hat{1}_E(\chi). && \text{(Plancherel)}
  \end{align}

  We now compute $\hat{\varphi}_P(\chi)$ and $\hat{1}_E(\chi)$.  We start by
  noting that the dual space $\widehat{\cV^3}$ is isomorphic to $\hat{\cV}^3$.
  That is, each character $\chi \in \widehat{\cV^3}$ is of the form
  $\chi(x_1, x_2, x_3) =
  \chi_1(x_1)\chi_2(x_2)\chi_3(x_3)$ for some (uniquely
  determined) $\chi_1, \chi_2, \chi_3 \in \hat{\cV}$ and conversely, each choice
  of $\chi_1, \chi_2, \chi_3 \in \hat{\cV}$ gives rise to some
  $\chi \in \widehat{\cV^3}$.

  The Fourier transform of $\varphi_P$ is given by \jnote{todo: explain why}
  \begin{equation}
    \label{eq:pdf-P}
    \hat{\varphi}_P(\chi_1, \chi_2, \chi_3) =
    \begin{cases}
      1 & \text{if $\chi_1 = \chi_2 = \chi_3$} \\
      0 & \text{otherwise.}
    \end{cases}
  \end{equation}

  Since $E$ is a product event, the Fourier transform of
  $1_{E} : \cV^3 \to \binset$ is given by
  \begin{align}
    \notag \hat{1}_E(\chi_1, \chi_2, \chi_3)
    &= \prod_{i \in [3]} \hat{1}_{E_i}(\chi_i) \\
    \label{eq:pdf-indicator} &= U(E) \cdot \prod_{i \in [3]} \hat{\varphi}_{E_i}(\chi_i).
  \end{align}
  Substituting \cref{eq:pdf-P,eq:pdf-indicator} into \cref{eq:fourier-formula}
  concludes the proof of the lemma.
\end{proof}
\begin{corollary}
  \label{cor:fourier-prob-diff}
  With $\cV$, $P$, $E$, and $U$ as in \cref{lemma:fourier-formula},
  \[
    \left | P(E) - U(E) \right | \le \sum_{\chi \in \hat{\cV} \setminus \{1\}}
    \prod_{i \in [3]} \big | \hat{1}_{E_i}(\chi) \big |,
  \]
  where $1 \in \hat{\cV}$ denotes the trivial character.
\end{corollary}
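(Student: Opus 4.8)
The plan is to simply extract the trivial-character term from the Fourier expansion of $P(E)$ supplied by \cref{lemma:fourier-formula} and apply the triangle inequality. Concretely, I would start from the first equality in \cref{lemma:fourier-formula},
\[
  P(E) = \sum_{\chi \in \hat{\cV}} \prod_{i \in [3]} \hat{1}_{E_i}(\chi),
\]
and separate out the summand corresponding to the trivial character $\chi = 1$.

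Next I would identify that summand with $U(E)$. Since $1_{E_i} : \cV \to \binset$ and $1 \in \hat{\cV}$ is the constant function, $\hat{1}_{E_i}(1) = \E_{x \gets \cV}[1_{E_i}(x)] = |E_i| / |\cV|$. Because $E = E_1 \times E_2 \times E_3$ and $U$ is the uniform (hence product) distribution on $\cV^3$, we have $U(E) = \prod_{i \in [3]} |E_i|/|\cV| = \prod_{i \in [3]} \hat{1}_{E_i}(1)$. Hence the $\chi = 1$ term is exactly $U(E)$, and therefore
\[
  P(E) - U(E) = \sum_{\chi \in \hat{\cV} \setminus \{1\}} \prod_{i \in [3]} \hat{1}_{E_i}(\chi).
\]

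Finally I would take absolute values and apply the triangle inequality together with multiplicativity of $|\cdot|$:
\[
  \bigl| P(E) - U(E) \bigr| \le \sum_{\chi \in \hat{\cV} \setminus \{1\}} \Bigl| \prod_{i \in [3]} \hat{1}_{E_i}(\chi) \Bigr| = \sum_{\chi \in \hat{\cV} \setminus \{1\}} \prod_{i \in [3]} \bigl| \hat{1}_{E_i}(\chi) \bigr|,
\]
which is the claimed bound. There is essentially no obstacle here: the only point requiring a line of justification is the identification of the trivial-character contribution with $U(E)$, which reduces to the observation that the Fourier coefficient at the trivial character is the mean and that both $E$ and $U$ factor as products over the three coordinates. (Alternatively, one could run the same argument off the second equality in \cref{lemma:fourier-formula}, using $\hat{\varphi}_{E_i}(1) = 1$ to see the $\chi = 1$ term equals $U(E) \cdot 1$; either route is immediate.)
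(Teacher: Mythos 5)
Your proposal is correct and is essentially the paper's argument: the paper also isolates the trivial character and applies the triangle inequality, merely running it off the second (density) equality of \cref{lemma:fourier-formula} via $\hat{\varphi}(1)=1$, which is the alternative route you yourself note. No gap here.
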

\begin{proof}
  For any probability density function $\varphi$, we have $\hat{\varphi}(1) =
  1$, so
  \begin{align*}
    \left | P(E) - U(E) \right |
    & = U(E) \cdot \left | \frac{P(E)}{U(E)} - 1
      \right | \\
    & \le U(E) \cdot \left | \sum_{\chi \in \hat{\cV} \setminus \{1\}} \prod_{i \in
      [3]} \hat{\varphi}_{E_i}(\chi) \right | \\
    & \le \sum_{\chi \in \hat{\cV} \setminus \{1\}} \prod_{i \in [3]} \left |
      \hat{1}_{E_i}(\chi) \right |.\qedhere
  \end{align*}
\end{proof}

\begin{lemma}
  \label{lemma:product-function}
  Let $\cV$ be a (finite) vector space over $\F_2$, let $P$ be uniform on
  $\{x \in \cV^3 : x_1 + x_2 + x_3 = 0\}$, let $U$ be uniform on $\cV^3$, and
  let $X = (X_1, X_2, X_3)$ denote the identity\footnote{Specifically, with the
    formalism of random variables as functions on a sample space, we mean that
    $X$ is the identity function, mapping $(x_1, x_2, x_3)$ to
    $(x_1, x_2, x_3)$.} random variable on $\cV^3$.  Let $Y_i = Y_i(X_i)$ be a
  $\cY_i$-valued random variable for each $i \in [3]$, let
  $Y = (Y_1, Y_2, Y_3)$, and let $\cY = \cY_1 \times \cY_2 \times \cY_3$.

  Let $\cW$ be a subspace of $\cV$.  If for all $\chi \in \hat{\cW}$,
  \begin{equation}
    \label{eq:main-lemma-assumption}
    \E_{(x,y) \gets P_{X,Y}} \left [ \dtv \big ( P_{\chi(X_1) | X \in x + \cW^3,
        Y_1 = y_1}, U_{\chi(X_1) | X \in x + \cW^3} \big ) \right ] \le \epsilon,
  \end{equation}
  then
  \[
    \E_{x \gets P_X} \left [ \dtv\big ( P_{Y | X \in x + \cW^3}, U_{Y | X
        \in x + \cW^3} \big ) \right ] \le \epsilon \cdot \sqrt{|\cY_2| \cdot
      |\cY_3|}.
  \]
\end{lemma}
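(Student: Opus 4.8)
I would prove this one coset of $\cW^3$ at a time. Fix a coset $c = x + \cW^3$ meeting the support of $P_X$; its representative may be chosen with $x_1+x_2+x_3=0$. Conditioned on $X\in c$, $P$ is uniform on $\{x'\in c : x'_1+x'_2+x'_3=0\}$ while $U$ is uniform on $c$, and translating by $-x$ turns these into the $\GHZ$ distribution on $\cW^3$ and the uniform distribution on $\cW^3$, respectively. For $i\in[3]$ and $y_i\in\cY_i$ set $\beta_i^{y_i}(w)\eqdef\Pr[Y_i(x_i+w)=y_i]$, a $[0,1]$-valued function on $\cW$ with $\sum_{y_i}\beta_i^{y_i}\equiv 1$, and write $\hat\beta_i^{y_i}(\chi)=\E_{w\gets\cW}[\beta_i^{y_i}(w)\chi(w)]$. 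Running the Plancherel computation from the proof of \cref{lemma:fourier-formula} with the product function $\beta_1^{y_1}\times\beta_2^{y_2}\times\beta_3^{y_3}$ in place of an indicator gives $P_{Y|X\in c}(y)=\sum_{\chi\in\hat\cW}\prod_i\hat\beta_i^{y_i}(\chi)$ and $U_{Y|X\in c}(y)=\prod_i\hat\beta_i^{y_i}(1)$, hence
\[
  \dtv\big(P_{Y|X\in c},U_{Y|X\in c}\big)\;=\;\tfrac12\sum_{y\in\cY}\Big|\sum_{\chi\in\hat\cW\setminus\{1\}}\prod_{i\in[3]}\hat\beta_i^{y_i}(\chi)\Big|.
\]

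Next I would translate the hypothesis. For nontrivial $\chi$, conditioned on $X\in c$ the variable $X_1$ is uniform on $x_1+\cW$, and conditioning on $Y_1=y_1$ reweights it by $\beta_1^{y_1}/\hat\beta_1^{y_1}(1)$; so $\chi(X_1)$ has bias $\pm\hat\beta_1^{y_1}(\chi)/\hat\beta_1^{y_1}(1)$ under $P$ and bias $0$ under $U$, making the inner total-variation distance in the hypothesis equal to $|\hat\beta_1^{y_1}(\chi)|/(2\hat\beta_1^{y_1}(1))$. Since $y_1$ is drawn with probability $\hat\beta_1^{y_1}(1)$ given the coset, averaging over $(x,y)$ shows the hypothesis is equivalent to $\E_{x\gets P_X}\!\big[\sum_{y_1}|\hat\beta_1^{y_1}(\chi)|\big]\le 2\epsilon$ for every nontrivial $\chi$ (with $\hat\beta_1^{y_1}$ taken relative to the coset of $x$).

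It then remains to bound $\E_{x\gets P_X}\!\big[\dtv(P_{Y|X\in c},U_{Y|X\in c})\big]$. Two elementary facts drive the estimate: (i) since each $\beta_i^{y_i}$ is $[0,1]$-valued with $\sum_{y_i}\beta_i^{y_i}\equiv 1$, Parseval gives $\sum_{y_i}\sum_\chi\hat\beta_i^{y_i}(\chi)^2=\E_w[\sum_{y_i}\beta_i^{y_i}(w)^2]\le 1$, and hence $\sum_\chi\big(\sum_{y_i}|\hat\beta_i^{y_i}(\chi)|\big)^2\le|\cY_i|$ -- this is what ultimately produces the factor $\sqrt{|\cY_2||\cY_3|}$; and (ii) only player $1$'s coefficients are governed by the hypothesis, so the Cauchy--Schwarz must pair players $2$ and $3$ against one another while charging $\sum_{y_1}|\hat\beta_1^{y_1}(\chi)|$ to the hypothesis. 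Expanding the absolute value by the triangle inequality and grouping by $\chi$ gives the bound $\tfrac12\sum_{\chi\neq1}\big(\sum_{y_1}|\hat\beta_1^{y_1}(\chi)|\big)\big(\sum_{y_2}|\hat\beta_2^{y_2}(\chi)|\big)\big(\sum_{y_3}|\hat\beta_3^{y_3}(\chi)|\big)$, which one finishes using (i), (ii), and the coset-expectation from the hypothesis.

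The step I expect to be the real obstacle is handling the sum over Fourier characters without paying a factor of $|\hat\cW|=|\cW|$: the hypothesis only controls $\E_x[\sum_{y_1}|\hat\beta_1^{y_1}(\chi)|]$ for each \emph{individual} $\chi$, so naively summing these bounds over all $\chi$ is hopelessly lossy. The resolution must exploit cancellation in the $\chi$-sum -- keeping $\sum_\chi$ inside the absolute value and applying Cauchy--Schwarz across $\chi$ only to the pair $(\text{player }2,\text{ player }3)$ via observation (i), equivalently via positive-semidefiniteness of the Gram matrices $B_i(\chi,\chi')=\sum_{y_i}\hat\beta_i^{y_i}(\chi)\hat\beta_i^{y_i}(\chi')$ for $i=2,3$ together with the Schur product theorem -- so that the character sum contributes $\sqrt{|\cY_2||\cY_3|}$ rather than $|\cW|\sqrt{|\cY_2||\cY_3|}$, leaving exactly the player-$1$ quantity that the hypothesis bounds (up to constants, which one then tracks).
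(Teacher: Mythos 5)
Your setup is essentially the paper's: conditioning on a coset $c=x+\cW^3$, running the Plancherel computation of \cref{lemma:fourier-formula} (cf.\ \cref{cor:fourier-prob-diff}) to get, after the triangle inequality over characters, $\dtv\big(P_{Y|X\in c},U_{Y|X\in c}\big)\le\tfrac12\sum_{\chi\neq 1}a_1(\chi)a_2(\chi)a_3(\chi)$ with $a_i(\chi)\eqdef\sum_{y_i}\big|\hat\beta_i^{y_i}(\chi)\big|$, and translating the hypothesis into $\E_{x\gets P_X}\big[a_1(\chi)\big]\le 2\epsilon$ for each fixed nontrivial $\chi$ — these are exactly the paper's steps (and your worry that the triangle inequality over $\chi$ is ``hopelessly lossy'' is a red herring: the paper takes it, and it is Parseval, not cancellation in the $\chi$-sum, that prevents the $|\cW|$ factor). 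The genuine gap is in the finish, precisely the step you flag as the obstacle. A Cauchy--Schwarz ``across $\chi$ only,'' per coset, pairing players $2$ and $3$, does not leave ``exactly the player-1 quantity that the hypothesis bounds'': it leaves either $\sup_{\chi\neq1}a_1(\chi)$, which the per-$\chi$ average bound does not control without paying roughly $|\cW|$, or, after splitting $a_1=\sqrt{a_1}\cdot\sqrt{a_1}$ between the two halves, the quantities $\E_{x\gets P_X}\big[\sum_{\chi\neq1}a_1(\chi)\,a_i(\chi)^2\big]$ for $i=2,3$ — and the hypothesis bounds only $\E_x[a_1(\chi)]$, not the expectation of such a product. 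Your Gram-matrix/Schur-product remark is a reformulation of your observation (i) and does not supply the missing link.

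The missing idea is pairwise independence under $P$: $(X_1,Y_1)$ is independent of $(X_i,Y_i)$ for each $i\in\{2,3\}$ (equivalently, the cosets $x_1+\cW$ and $x_i+\cW$ are independent), so one should apply Cauchy--Schwarz with respect to the \emph{joint} measure $\E_{x\gets P_X}\sum_{\chi\neq1}$, splitting player 1's factor between the two halves, and then factor $\E_x\big[a_1(\chi)a_i(\chi)^2\big]=\E_x[a_1(\chi)]\cdot\E_x[a_i(\chi)^2]\le 2\epsilon\,\E_x[a_i(\chi)^2]$; only now does Parseval (your observation (i)) sum the characters to give $2\epsilon|\cY_i|$ per factor and hence $\epsilon\sqrt{|\cY_2||\cY_3|}$ overall. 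This factorization is also why the three-player product must be broken into the pairs $(1,2)$ and $(1,3)$: the three cosets are only pairwise, not mutually, independent. With the joint Cauchy--Schwarz and this independence step added, your argument is the paper's proof.
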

\begin{proof}
  For $x \in \cV^3$, we will write $\bar{x}$ to denote the set $x + \cW^3$.
  Recall that $\cV / \cW$ denotes the set of all cosets
  $\{x + \cW\}_{x \in \cV}$.  For every $i \in [3]$, every
  $\bar{x}_i \in \cV / \cW$, and every $y_i \in \cY_i$, define
  $1_{i, \bar{x}_i, y_i} : \bar{x}_i \to \binset$ to be the indicator for the
  set $Y_i^{-1}(y_i) \cap \bar{x}_i$.  Define
  $\varphi_{i, \bar{x}_i, y_i}$ to be the density (in
  $\bar{x}_i$) of the uniform distribution on $Y_i^{-1}(y_i) \cap \bar{x}_i$.
  That is,
  \[
    \begin{array}{l}
      \varphi_{i, \bar{x}_i, y_i} : \bar{x}_i \to \bbR \\
      \varphi_{i, \bar{x}_i, y_i}(x'_i) =
      \begin{cases}
        \frac{|\bar{x}_i|}{|Y_i^{-1}(y_i) \cap \bar{x}_i|} & \text{if $Y_i(x'_i) = y_i$}\\
        0 & \text{otherwise.}
      \end{cases}
    \end{array}
  \]
  $\varphi_{i, \bar{x}_i, y_i}$ is easily seen to be related to
  $1_{i, \bar{x}_i, y_i}$ as
  \[
    1_{i, \bar{x}_i, y_i} = P_{Y_i|\bar{X}_i = \bar{x}_i}(y_i) \cdot \varphi_{i,
      \bar{x}_i, y_i}.
  \]


  With this notation, our assumption that \cref{eq:main-lemma-assumption} holds
  (for all $\chi \in \hat{\cW}$) is equivalent to assuming that for all
  $\chi \in \hat{\cW} \setminus \{1\}$,
  \begin{equation}
    \label{eq:each-fourier-bounded}
    \E_{(x,y) \gets P_{X,Y}} \left [ \big | \hat{\varphi}_{1,\bar{x}_1,
        y_1}(\chi) \big | \right ] \le 2 \epsilon.
  \end{equation}
  This is because for all $\chi \in \hat{\cW} \setminus \{1\}$, the distribution
  $U_{\chi(X_1) | X \in x + \cW^3}$ is uniform on $\{\pm 1\}$.

  In general for $x \in \Supp(P_X)$, we have (by \cref{cor:fourier-prob-diff})
  that for any $y \in \cY$,
  \begin{equation}
    \label{eq:gen-prob-diff}
    \left | P_{Y|X \in x + \cW^3}(y) - U_{Y|X \in x + \cW^3}(y) \right
    | \le \sum_{\chi \in \hat{\cW} \setminus \{1\}} \prod_{i \in [3]}
    \big | \hat{1}_{i, \bar{x}_i, y_i}(\chi) \big |
  \end{equation}
  because:
  \begin{itemize}
  \item the event $E = \{Y = y\}$ is a product event
    $E_1 \times E_2 \times E_3$, where each $E_i = \{Y_i = y_i\}$ depends only
    on $X_i$ or equivalently on $X_i - x_i$,
  \item the distribution $P_{X - x | \bar{X} = \bar{x}}$ is uniform on
    $\{(\vec{w}_1, \vec{w}_2, \vec{w}_3) \in \cW^3 : \vec{w}_1 + \vec{w}_2 +
    \vec{w}_3 = 0\}$, and

  \item the distribution $U_{X - x | \bar{X} = \bar{x}}$ is uniform on
    $\{(\vec{w}_1, \vec{w}_2, \vec{w}_3) \in \cW^3 \}$.
  \end{itemize}


  Thus we have
  \begin{align*}
    2 \cdot \E_{x \gets P_X} \left [ \dtv\big ( P_{Y | X \in x + \cW^3}, U_{Y | X
    \in x + \cW^3} \big ) \right ]
    & = \E_{x \gets P_{X}} \sum_{y \in \cY}  \left | P_{Y|X \in x + \cW^3}(y) -
      U_{Y|X \in x + \cW^3}(y) \right |   \\
    & \le \E_{x \gets P_X} \sum_{y \in \cY}  \sum_{\chi \neq 1}  \prod_{i \in [3]} \left
      | \hat{1}_{i, \bar{x}_i, y_i}(\chi) \right |
    \\
    & = \E_{x \gets P_X} \sum_{y \in \cY}  \sum_{\chi \neq 1}  \prod_{i \in
      \{2, 3\}} \sqrt{\left
      | \hat{1}_{1, \bar{x}_1, y_1}(\chi) \right | \cdot \hat{1}_{i, \bar{x}_i,
      y_i}(\chi)^2}.
  \end{align*}
  Now, we apply Cauchy-Schwarz on the inner product space whose elements are
  real-valued functions of $(x, y, \chi)$, and where the inner product is
  defined by
  $\langle f, g \rangle \eqdef \E_{x \gets P_X} \sum_{y \in \cY} \sum_{\chi \neq
    1} f(x, y, \chi) \cdot g(x, y, \chi)$.  This bounds the above by
  \begin{align*}
    & \sqrt{\prod_{i \in \{2,3\}} \left (\E_{x \gets P_X}  \sum_{y \in \cY} \sum_{\chi \neq
      1} \big | \hat{1}_{1, \bar{x}_1, y_1}(\chi) \big | \cdot \hat{1}_{i,
      \bar{x}_i, y_i}(\chi)^2 \right )}  \\
    = & \sqrt{\prod_{i \in \{2,3\}} \left (\sum_{\chi \neq
        1} \sum_{y \in \cY} \E_{x \gets P_X}  \Big [\big | \hat{1}_{1, \bar{x}_1, y_1}(\chi) \big | \cdot \hat{1}_{i,
        \bar{x}_i, y_i}(\chi)^2 \Big ]\right )}.
  \end{align*}
  By the independence of $(X_1, Y_1)$ and $(X_i, Y_i)$ under $P$ for $i \in \{2,3\}$, this is
  equal to
  \begin{align*}
    & \prod_{i \in \{2,3\}} \sqrt{ \sum_{\chi \neq 1}  \sum_{y \in \cY}
      \E_{x \gets P_X} \Big [ \big | \hat{1}_{1, \bar{x}_1, y_1}(\chi) \big |
      \Big ] \cdot \E_{x \gets P_X} \left
      [ \hat{1}_{i, \bar{x}_i, y_i}(\chi)^2 \right ] } \\
    = & \prod_{i \in \{2,3\}} \sqrt{ |\cY_{5 - i}| \cdot \sum_{\chi \neq 1}  \left (\sum_{y_1 \in \cY_1}
        \E_{x \gets P_X} \Big[ \big | \hat{1}_{1, \bar{x}_1, y_1}(\chi) \big |
        \Big ] \right ) \cdot \left ( \sum_{y_i \in \cY_i} \E_{x \gets P_X} \left
        [ \hat{1}_{i, \bar{x}_i, y_i}(\chi)^2 \right ] \right ) }.
  \end{align*}
  But the function $1_{1, \bar{x}_1, y_1}$ is just
  $P_{Y_1|\bar{X}_1 = \bar{x}_1}(y_1) \cdot \varphi_{1, \bar{x}_1, y_1}$, so the
  above is
  \begin{align*}
    &     \prod_{i \in \{2,3\}} \sqrt{ |\cY_{5-i}| \cdot\sum_{\chi \neq
      1} \left ( \sum_{y_1 \in \cY_1} \E_{x \gets P_X} \left[
      P_{Y_1|\bar{X}_1=\bar{x}_1}(y_1)\cdot  \big | \hat{\varphi}_{1, \bar{x}_1,
      y_1}(\chi) \big | \right ] \right ) \cdot \left ( \sum_{y_i \in \cY_i} \E_{x \gets P_X} \left [ \hat{1}_{i,
      \bar{x}_i, y_i}(\chi)^2 \right ] \right )} \\
    &   =  \prod_{i \in \{2,3\}} \sqrt{ |\cY_{i}| \cdot\sum_{\chi \neq
      1} \left ( \sum_{y_1 \in \cY_1} \E_{x \gets P_X} \left[
      P_{Y_1|\bar{X}_1=\bar{x}_1}(y_1)\cdot  \big | \hat{\varphi}_{1, \bar{x}_1,
      y_1}(\chi) \big | \right ] \right ) \cdot \left ( \sum_{y_i \in \cY_i} \E_{x \gets P_X} \left [ \hat{1}_{i,
      \bar{x}_i, y_i}(\chi)^2 \right ] \right )}
  \end{align*}
  which by the definition of expectation is
  \[
    \prod_{i \in \{2,3\}} \sqrt{|\cY_i| \cdot \sum_{\chi \neq
        1} \left ( \E_{x,y \gets P_{X,Y}} \left[
          \big | \hat{\varphi}_{1, \bar{x}_1,
            y_1}(\chi) \big | \right ] \right ) \cdot \left ( \sum_{y_i \in \cY_i} \E_{x \gets P_X} \left [ \hat{1}_{i,
            \bar{x}_i, y_i}(\chi)^2 \right ] \right )}.
  \]

  We use \cref{eq:each-fourier-bounded} to bound this by
  \begin{align*}
    & \prod_{i \in \{2,3\}} \sqrt{2 \epsilon |\cY_i| \cdot \sum_{\chi \neq
      1}\sum_{y_i \in \cY_i} \E_{x \gets P_X} \left
      [ \hat{1}_{i, \bar{x}_i, y_i}(\chi)^2 \right ]}
    \\
    & \le \prod_{i \in \{2,3\}} \sqrt{2 \epsilon |\cY_i| \cdot \sum_{y_i \in \cY_i} \E_{x \gets P_X} \left
      [ \E_{x' \gets \bar{x}_i} \left [ 1_{i, \bar{x}_i, y_i}(x')^2 \right ]
      \right ]} && \text{(Parseval's Theorem)} \\
    & = \prod_{i \in \{2,3\}} \sqrt{2 \epsilon |\cY_i| \cdot \E_{x \gets P_X} \left
      [ \E_{x' \gets \bar{x}_i} \left [ \sum_{y_i \in \cY_i} 1_{i, \bar{x}_i, y_i}(x')^2 \right ]
      \right ]}.
  \end{align*}
  But for $y_i \neq y'_i$, the supports of $1_{i, \bar{x}_i, y_i}$ and
  $1_{i, \bar{x}_i, y'_i}$ are disjoint, so this is at most $2 \epsilon
  \sqrt{|\cY_2| \cdot |\cY_3|}$.
\end{proof}


\section{Local Embeddability in Affine Subspaces}
In this section we show that the parallel repeated GHZ query distribution has
many coordinates in which the GHZ query distribution can be locally embedded,
even conditioned on any affine event of low co-dimension.  We first recall the
notion of a local embedding.
\begin{definition}
  Let $\Sigma$ be a finite set, let $k$ and $n$ be positive integers, let $Q$ be
  a probability distribution on $\Sigma^k$, and let $\tilde{P}$ be a probability distribution on
  $\Sigma^{k \times n}$.

  We say that \textdef{$Q$ is locally embeddable in the $j^{th}$ coordinate of
    $\tilde{P}$} if there exists a probability distribution $R$ on a set
  $\mathcal{R}$ and functions
  $e_1, \ldots, e_k : \Sigma \times \mathcal{R} \to \Sigma^n$ such that when
  sampling $q \gets Q$, $r \gets R$, if $\tilde{X}$ denotes the random variable
  \[
    \tilde{X} \eqdef \left (
      \begin{array}{c}
        e_1(q_{1}, r)^\transp \\
        \vdots \\
        e_k(q_{k}, r)^\transp
      \end{array}
    \right ),
  \]
  then:
  \begin{enumerate}
  \item The probability law of $\tilde{X}$ is exactly $\tilde{P}$.
  \item It holds with probability $1$ that $\tilde{X}^j = q$.
  \end{enumerate}
\end{definition}

\begin{proposition}
  \label{prop:linear-case}
  Let $n$ and $m$ be positive integers with $m < n$.  Let $Q$ denote the GHZ
  query distribution (uniform on the set
  $\cQ = \{x \in \F_2^3 : x_1 + x_2 + x_3 = 0\}$), and let $\cW$ be an affine shift of
  $\cV^3$ for a subspace $\cV \le \F_2^{1 \times n}$ of codimension $m$ with $Q^n(\cW) >
  0$.

  Then there exist at least $n - m$ distinct values of $j \in [n]$ for which $Q$
  is locally embeddable in the $j^{th}$ coordinate of
  $\tilde{P} \eqdef Q^n | \cW$.



\end{proposition}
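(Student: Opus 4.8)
The plan is to reduce local embeddability in a coordinate $j$ to a linear‑algebraic condition on $\cV$, and then to check that this condition holds for all but $m$ coordinates. Set‑up: since $\cV\le\F_2^{1\times n}$ has codimension $m$, write $\cV=\{v:vA=0\}$ for an $n\times m$ matrix $A$ of full column rank, and let $a^1,\dots,a^n\in\F_2^m$ be its rows. Identifying $\F_2^{3\times n}$ with triples $(x_1,x_2,x_3)$ of row vectors and writing $x^i_\ell=(x_\ell)_i$, we have $\cW=w+\cV^3=\{x:x_\ell A=w_\ell A\ \forall\ell\}$, so $\tilde P=Q^n\mid\cW$ is uniform on
\[
  U \;=\; \Bigl\{\, x \in \F_2^{3\times n} \;:\; \text{each column } x^i=(x^i_1,x^i_2,x^i_3)\in\cQ,\ \text{ and } \textstyle\sum_{i\in[n]} x^i_\ell\, a^i = b_\ell \ \text{ for } \ell\in\{1,2,3\} \,\Bigr\},
\]
where $b_\ell\eqdef\sum_i w^i_\ell a^i$; this set is nonempty since $Q^n(\cW)>0$. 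Call $j\in[n]$ \emph{good} if there is a set $S\ni j$ with $\sum_{i\in S}a^i=0$; working over $\F_2$ (where spans coincide with sets of subset sums) this is equivalent to $a^j\in\mathrm{span}\{a^i:i\neq j\}$.

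Now fix a good $j$ with witness $S$, and consider the linear automorphism $\Phi$ of $\F_2^{3\times n}$ sending $x=(x^1,\dots,x^n)$ to $\bigl(x^j,\ (x^i-x^j)_{i\in S\setminus\{j\}},\ (x^i)_{i\notin S}\bigr)$ (it is invertible via $x^i=(x^i-x^j)+x^j$). I claim $\Phi(U)=\cQ\times U_1$ for some nonempty $U_1$. Indeed, after the substitution $x^j=y^j$, $x^i=y^i+y^j$ for $i\in S\setminus\{j\}$, and $x^i=y^i$ for $i\notin S$, the condition ``$x^i\in\cQ$ for all $i$'' becomes ``$y^j\in\cQ$'' together with ``$y^i\in\cQ$ for all $i\neq j$'' (since $\cQ$ is a subspace), while the linear condition $\sum_i x^i_\ell a^i=b_\ell$ becomes $y^j_\ell\bigl(\sum_{i\in S}a^i\bigr)+\sum_{i\in S\setminus\{j\}}y^i_\ell a^i+\sum_{i\notin S}y^i_\ell a^i=b_\ell$, whose first term vanishes because $\sum_{i\in S}a^i=0$ — so it does not constrain $y^j$ at all. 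Hence the $y^j$‑coordinate is decoupled from the rest, proving $\Phi(U)=\cQ\times U_1$ with $U_1$ nonempty. In particular, for $\tilde X\gets\tilde P$ the column $\tilde X^j$ is uniform on $\cQ$ and independent of the remaining coordinates of $\Phi(\tilde X)$. I then define the embedding by taking $\cR=U_1$ and $R$ uniform on it, with $r=\bigl((d^i)_{i\in S\setminus\{j\}},(x^i)_{i\notin S}\bigr)\in U_1$, and letting $e_\ell(q_\ell,r)\in\F_2^n$ be the vector with $j$‑th entry $q_\ell$, with $i$‑th entry $q_\ell+d^i_\ell$ for $i\in S\setminus\{j\}$, and with $i$‑th entry $x^i_\ell$ for $i\notin S$. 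By construction $(q,r)\mapsto\bigl(e_1(q_1,r);e_2(q_2,r);e_3(q_3,r)\bigr)$ is exactly $\Phi^{-1}$, a bijection $\cQ\times U_1\to U$ carrying the product $Q\times R$ (uniform on $\cQ\times U_1$) to $\tilde P$, and its $j$‑th column is $q$ identically. What remains is the routine check that $\Phi^{-1}(q,r)\in U$ for every $(q,r)\in\cQ\times U_1$: each column lies in $\cQ$ because $\cQ$ is a subspace, and the $\cW$‑constraint holds because $a^j=\sum_{i\in S\setminus\{j\}}a^i$ exactly cancels the $q_\ell$‑terms.

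It remains to count good coordinates. Let $B\subseteq[n]$ be the bad ones. For each $j\in B$ we have $a^j\notin\mathrm{span}\{a^i:i\neq j\}$, and a fortiori $a^j\notin\mathrm{span}\{a^i:i\in B,\ i\neq j\}$; as this holds for every $j\in B$, the vectors $\{a^j\}_{j\in B}$ are linearly independent in $\F_2^m$, so $|B|\le m$. Hence at least $n-m$ coordinates are good, which proves the proposition.

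The two content‑bearing points are: (i) an affine ``power'' constraint on $\cV^3$ intersected with the GHZ relation always leaves all but $m$ coordinates inside a zero subset‑sum $S$ of the columns $a^i$, and (ii) translating along such an $S$ decouples the conditioned distribution as $\cQ$ times an independent remainder — which is precisely the structure a local embedding requires. I expect the bookkeeping in (ii) (tracking the $b_\ell$ and verifying $\Phi^{-1}$ maps back into $U$) to be the most error‑prone part, though it is conceptually straightforward; the counting in (i) is a one‑line linear‑independence argument.
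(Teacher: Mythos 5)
Your proof is correct and follows essentially the same route as the paper: both hinge on finding a set $S \ni j$ with $\sum_{i \in S} a^i = 0$ and then observing that translating along $S$ (your map $\Phi$, the paper's $e_i$'s and independence claims) decouples the $j$-th column as uniform on $\cQ$, independent of the rest. The only difference is cosmetic: you count embeddable coordinates directly, by noting that the rows indexed by bad coordinates are linearly independent in $\F_2^m$, whereas the paper argues by contradiction using pigeonhole on subset row-sums of $m+1$ allegedly bad coordinates — two equivalent elementary arguments.
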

\begin{proof}
  Suppose otherwise.  Without loss of generality, suppose that the coordinates
  that are \emph{not} locally embeddable include the first $n' \eqdef m + 1$
  coordinates (otherwise, $\cV$ can be permuted to make this so).  That is, for
  each $j \in [n']$, $Q$ is not locally embeddable in the $j^{th}$
  coordinate of $\tilde{P}$.

  Let the defining equations for $\cV$ be written as
  \[
    \cV \eqdef \left \{ x \in \F_2^{1 \times n} : x \cdot A = 0 \right \}
  \]
  for some choice of $A \in \F_2^{n \times m}$, and let
  $\vec{v} \in \F_2^{3 \times n}$ be such that $\cW = \vec{v} + \cV^3$.

  Because $2^{n'} > 2^{m}$, the pigeonhole principle implies that there exist
  two distinct sets $S_0, S_1 \subseteq [n']$ such that
  \[
    \sum_{j \in S_0} A_j = \sum_{j \in S_1} A_j,
  \]
  where recall that $A_j$ denotes the $j^{th}$ row of $A$.  Thus, there is a
  non-empty subset $S \eqdef S_0 \symdiff S_1 \subseteq [n']$ such that
  \begin{equation}
    \label{eq:pigeonhole}
    \sum_{j \in S}
    A_j = 0.
  \end{equation}

  Fix some such $S$.  We will show that for any $j \in S$, $Q$ is locally embeddable
  in the $j^{th}$ coordinate of $\tilde{P}$, which is a contradiction.  Let $X$
  denote the $\F_2^{3 \times n}$-valued random
  variable given by the identity function.

  \begin{claim}
    \label{claim:individual-distribution}
    For any $j \in S$, the distribution $\tilde{P}_{X^j}$ is identical to $Q$
    (i.e., uniformly random on $\cQ$).
  \end{claim}
  \begin{proof}
    Let $j \in S$ be given.  It suffices to show that for every $q, q' \in \cQ$,
    there is a bijection $\Phi_{q, q'} : \cQ^n \cap \cW \to \cQ^n \cap \cW$ such
    that $x \in \cQ^n \cap \cW$ satisfies $x^j = q$ if and only if
    $y \eqdef \Phi_{q, q'}(x)$ satisfies $y^j = q'$.  Such a bijection
    $\Phi_{q, q'}$ can be constructed by defining, for all $j' \in [n]$,
    \[
      \Phi_{q,q'}(x)^{j'} =
      \begin{cases}
        x^{j'} + q' - q & \text{if $j' \in S$} \\
        x^{j'} & \text{otherwise.}
      \end{cases}
    \]
    $\Phi_{q,q'}$ clearly is an injective map from $\cQ^n$ to $\cQ^n$ and
    satisfies $\Phi_{q,q'}(x)^j = x^j + q' - q$, so the only
    remaining thing to check is that it indeed maps $\cW$ into $\cW$.  This is
    true because it preserves $x \cdot A$.  Indeed, for any $i \in [3]$,
    \begin{align*}
      \Phi_{q, q'}(x)_i \cdot A & = x_i \cdot A + \sum_{j' \in S} (q'_i -
                                  q_i) \cdot A_{j'} \\
                                & = x_i \cdot A + (q'_i - q_i) \cdot \sum_{j' \in S} A_{j'} \\
                                & = x_i \cdot A  &&
                                                    \text{(by
                                                    \cref{eq:pigeonhole}).}  \qedhere
    \end{align*}
  \end{proof}

  For any $j \in S$, let $\Delta^{(j)}$ denote the random variable
  $\big (X^{j'} - X^j \big )_{j' \in S \setminus
    \{j\}}$.

  \begin{claim}\label{claim:diff-independence}
    For any $j \in S$, it holds in $\tilde{P}$ that
    $\big (\Delta^{(j)}, X^{[n] \setminus S} \big )$ and
    $X^j$ are independent.
  \end{claim}
  \begin{proof}
    Equivalently (using the definition of $\tilde{P}$), let $E$ denote the event
    that $X \in \cW$, i.e. for all $i \in [3]$,
    \[
      (X_i - \vec{v}_i) \cdot A = 0.
    \]
    We need to show that in $P$, the random variables $X^j$ and
    $\big (\Delta^{(j)}, X^{[n] \setminus S} \big )$ are conditionally
  independent given $E$.  To show this, we rely on the following fact:
    \begin{fact}
      If $Y$ and $Z$ are any independent random variables, and if $E$ is any
      event that depends only on $Z$ (and occurs with non-zero probability),
      then $Y$ and $Z$ are conditionally independent given $E$.
    \end{fact}
    It is clear that $X^j$ and $\big (\Delta^{(j)}, X^{[n] \setminus S} \big )$
    are independent in $P$.  It is also the case that $E$ depends only on
    $\big (\Delta^{(j)}, X^{ [n] \setminus S} \big )$: $E$ is defined by the
    constraint that for all $i \in [3]$,
    \begin{align*}
      0 &= (X_i - \vec{v}_i) \cdot A \\
        &= \sum_{j' \in S}  (X^{j'}_i  -
          X^{j}_i - \vec{v}^{j'}_i) \cdot A_{j'} + \sum_{j' \in [n] \setminus S}
          ( X^{j'}_i - \vec{v}^{j'}_i) \cdot A_{j'}  && \text{(by \cref{eq:pigeonhole})}\\
        &= - \vec{v}^j_i \cdot A_j + \underbrace{\sum_{j' \in S \setminus \{j\}} (X^{j'}_i  -
          X^{j}_i - \vec{v}^{j'}_i) \cdot A_{j'}}_{\text{depends only on $\Delta^{(j)}$}} + \underbrace{\sum_{j' \in [n] \setminus S}
          ( X^{j'}_i - \vec{v}^{j'}_i) \cdot A_{j'}}_{\text{depends only on $X^{[n]
          \setminus S}$}}. && \qedhere
    \end{align*}
  \end{proof}

  We now put everything togther. Fix any $j \in S$. We construct a local
  embedding of $Q$ into the $j^{th}$ coordinate of $\tilde{P}$.  For each
  $i \in [3]$, we define
  $e_i : \F_2 \times (\F_2^{3 \times n}) \to \F_2^{1 \times n}$ such that for
  each $j' \in [n]$:
  \[
    e_i(x, r)^{j'} =
    \begin{cases}
      x & \text{if $j' = j$} \\
      x + r^{j'}_{i} - r^j_{i} & \text{if $j' \in S \setminus \{j\}$} \\
      r^{j'}_{i} & \text{if $j' \notin S$.}
    \end{cases}
  \]

  Define the distribution $P^{(\embed)}$ to be the distribution on $x \in
  \F_2^{3 \times n}$ obtained
  by independently sampling $q \gets Q$
  and $r \gets \tilde{P}$, then defining
  \[
      x \eqdef \left ( \begin{array}{c} e_1(q_1, r) \\ e_2(q_2,
                          r) \\ e_3(q_3, r)
                        \end{array} \right ).
  \]
  It clearly holds with probability $1$ that $q = x^j$.

  \begin{claim}
    $P^{(\embed)} \equiv \tilde{P}$.
  \end{claim}
  \begin{proof}
   By definition, it is immediate that:
 $P^{(\embed)}_{X^{ j}} \equiv \tilde{P}_{X^{
        j}}$ and $P^{(\embed)}_{\Delta^{(j)}, X^{[n] \setminus S}} \equiv
    \tilde{P}_{\Delta^{(j)}, X^{ [n] \setminus S}}$.

    Finally, $X$ is fully determined by $X^{ j}$ and
    $(\Delta^{(j)}, X^{[n] \setminus S})$, which are independent in both
    $P^{(\embed)}$ (because $q$ and $r$ are sampled independently in the
    definition of $P^{(\embed)}$) and $\tilde{P}$ (by
    \cref{claim:diff-independence}).
  \end{proof}
  We have constructed an embedding of $Q$ into one of the first $n'$ coordinates
  of $\tilde{P}$, which is the desired contradiction.
\end{proof}


\section{Decomposition Into Pseudorandom Affine Components}
\label{sec:pseudorandom-partitions}
In this section we show that if $E$ is an arbitrary event with sufficient
probability mass under $P = Q_{\GHZ}^n$, then $\tilde{P} = P | E$ can be
decomposed into components with affine support that are ``similar'' to
corresponding components of $P$.  We will call such components pseudorandom.

We say that $\Pi$ is an \textdef{affine partition of $\F_2^{3 \times n}$} to
mean that:
\begin{itemize}
\item Each part $\Pi(x)$ of $\Pi$ has the form $\vec{w}(x) + \cV(x)^3$ where
  $\cV(x)$ is a subspace of $\F_2^n$, and
\item Each $\cV(x)$ has the same dimension, which we refer to as the dimension
  of $\Pi$ and denote by $\dim(\Pi)$.  The codimension of $\Pi$ is defined to be
  $n - \dim(\Pi)$.
\end{itemize}

\begin{definition}
  \label{def:pseudorandomness}
  If $\cW$ is an affine shift of a vector space $\cV^3$ (for $\cV \le \F_2^n$),
  we say that a $\cW$-valued random variable $X$ is
  \textdef{$(m, \epsilon)$-close} to $Y$ if for all linear functions
  $\phi : \F_2^n \to \F_2^m$ we have
  $\dkl(\phi^3(X) \| \phi^3(Y)) \le \epsilon$, where $\phi^3$ denotes the
  function mapping \[ \left ( \begin{array}{l} x_1 \\ x_2 \\ x_3
                \end{array}
      \right ) \mapsto
                \left ( \begin{array}{l}
                          \phi(x_1) \\ \phi(x_2) \\ \phi(x_3)
                          \end{array}
      \right ).
    \]

  We write $d_m(X \| Y)$ to denote the minimum $\epsilon$ for which $X$ is
  $(m, \epsilon)$-close to $Y$.
\end{definition}
\jnote{added this remark:} We remark that $d_m(X\|Y)$ is a non-decreasing
function of $m$.

\begin{lemma}
  \label{lemma:pseudorandom-partition}
  Let $P$ denote the distribution $Q_{\GHZ}^n$, let $X$ be the identity random
  variable, let $E$ be an event with $P(X \in E) = e^{-\Delta}$, and let
  $\tilde{P} = P \big | (X \in E)$.  For any $\delta > 0$ and any $m \in \Z^+$, there
  exists an affine partition $\Pi$ of $\F_2^{3 \times n}$, of codimension at
  most $m \cdot \frac{\Delta}{\delta}$, such that:
    \begin{equation}
      \label{eq:good-partition-KL}
      \E_{\pi \gets \tilde{P}_{\Pi(X)}} \left [
        d_m \Big (
        \tilde{P}_{X | X \in \pi}  \Big \|
        P_{X | X \in \pi}
        \Big )
      \right ] \le \delta.
    \end{equation}
\end{lemma}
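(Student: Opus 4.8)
\section{Proof proposal for \texorpdfstring{\cref{lemma:pseudorandom-partition}}{Lemma 5.3}}

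The plan is to build $\Pi$ by iterative refinement, starting from the trivial affine partition $\Pi_0 \eqdef \{\F_2^{3\times n}\}$ (which is affine of codimension $0$), and to bound the number of refinement steps with a KL-divergence potential. For an affine partition $\Pi$ define
\[
  \Phi(\Pi) \eqdef \E_{\pi \gets \tilde P_{\Pi(X)}}\Bigl[\dkl\bigl(\tilde P_{X\mid X\in\pi}\,\big\|\,P_{X\mid X\in\pi}\bigr)\Bigr].
\]
Since $\Pi(X)$ is a deterministic function of $X$, the chain rule for KL divergence gives $\dkl(\tilde P_X\|P_X) = \dkl(\tilde P_{\Pi(X)}\|P_{\Pi(X)}) + \Phi(\Pi)$, and since $\tilde P = P\mid(X\in E)$ with $P(X\in E)=e^{-\Delta}$ we have $\dkl(\tilde P_X\|P_X)=\Delta$. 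Hence $\Phi(\Pi) = \Delta - \dkl(\tilde P_{\Pi(X)}\|P_{\Pi(X)})$; in particular $0 \le \Phi(\Pi) \le \Delta$, with $\Phi(\Pi_0)=\Delta$. Moreover $\Phi$ is non-increasing under refinement: if $\Pi'$ refines $\Pi$ then $\Pi(X)$ is a function of $\Pi'(X)$, so the data-processing inequality gives $\dkl(\tilde P_{\Pi(X)}\|P_{\Pi(X)}) \le \dkl(\tilde P_{\Pi'(X)}\|P_{\Pi'(X)})$, i.e. $\Phi(\Pi') \le \Phi(\Pi)$.

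The refinement step: suppose \eqref{eq:good-partition-KL} fails for the current $\Pi$, i.e. $\E_{\pi\gets\tilde P_{\Pi(X)}}[d_m(\tilde P_{X\mid X\in\pi}\|P_{X\mid X\in\pi})] > \delta$. For each part $\pi=\vec w+\cV^3$ pick a linear map $\phi_\pi\colon\F_2^n\to\F_2^m$ attaining the maximum defining $d_m(\tilde P_{X\mid X\in\pi}\|P_{X\mid X\in\pi})$ (a maximum over finitely many maps, hence attained, and the relevant divergences are finite since $\Supp\tilde P\subseteq\Supp P$), and split $\pi$ into the nonempty level sets of $\phi_\pi^3$; each is an affine shift of $(\ker\phi_\pi\cap\cV)^3$, a subspace of codimension $\le m$ in $\cV$. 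Call the resulting partition $\Pi''$. By construction $\Pi''(X)$ carries exactly the information of the pair $(\Pi(X),\phi^3_{\Pi(X)}(X))$, so a second application of the chain rule yields
\[
  \dkl(\tilde P_{\Pi''(X)}\|P_{\Pi''(X)}) = \dkl(\tilde P_{\Pi(X)}\|P_{\Pi(X)}) + \E_{\pi\gets\tilde P_{\Pi(X)}}\bigl[\dkl(\phi_\pi^3(\tilde X_\pi)\|\phi_\pi^3(X_\pi))\bigr],
\]
where $\tilde X_\pi\sim\tilde P_{X\mid X\in\pi}$, $X_\pi\sim P_{X\mid X\in\pi}$; by the choice of $\phi_\pi$ the last expectation equals $\E_\pi[d_m(\tilde P_{X\mid X\in\pi}\|P_{X\mid X\in\pi})] > \delta$. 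The parts of $\Pi''$ may have differing dimensions ($\dim(\ker\phi_\pi\cap\cV)\ge\dim\cV-m$), so I finally pass to $\Pi'$ by cutting each part down to the common dimension $d^\ast\eqdef\min_\pi\dim(\ker\phi_\pi\cap\cV)$ (splitting along cosets of a chosen subspace); this keeps the partition affine, raises the codimension of $\Pi$ by at most $m$, and — since $\Pi'$ refines $\Pi''$ — gives $\Phi(\Pi')\le\Delta-\dkl(\tilde P_{\Pi''(X)}\|P_{\Pi''(X)}) < \Phi(\Pi)-\delta$.

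Putting this together: iterate the refinement step while \eqref{eq:good-partition-KL} fails. Each step decreases $\Phi$ by more than $\delta$ while raising codimension by at most $m$; since $0\le\Phi\le\Delta$ throughout, there are fewer than $\Delta/\delta$ steps, so we terminate with an affine partition of codimension at most $m\Delta/\delta$ satisfying \eqref{eq:good-partition-KL}. (If a part shrinks to a singleton its inner divergence is $0$, which is consistent with termination and needs no separate treatment.) I expect the only real content to be the two chain-rule identities together with the observation that refining a part by the level sets of $\phi_\pi^3$ makes $\Pi''(X)$ equivalent to $(\Pi(X),\phi^3_{\Pi(X)}(X))$; the one step needing a little care is the dimension-uniformization of $\Pi''$, which is harmless precisely because $\Phi$ only decreases under further refinement.
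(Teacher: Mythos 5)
Your proposal is correct and follows essentially the same route as the paper: iterative refinement from the trivial partition, a conditional-KL potential that starts at $\Delta$, drops by more than $\delta$ per refinement via the chain rule applied to the maximizing linear map $\phi_\pi^3$ on each part, and a codimension increase of at most $m$ per step, giving at most $\Delta/\delta$ steps. The only difference is cosmetic: the paper keeps all parts the same dimension by taking each $\phi_{i,\pi}$ to be full rank on its part, whereas you uniformize dimensions afterwards by further splitting, justified by monotonicity of the potential under refinement.
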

\begin{proof}
  We construct the claimed partition iteratively.  Start with the trivial
  $n$-dimensional affine partition $\Pi_0 = \{\F_2^{3 \times n}\}$.  Whenever
  $\Pi_i$ is a partition $\Pi$ for which \cref{eq:good-partition-KL} does not
  hold, there exists a function
  $\phi_i : \F_2^{3 \times n} \to \F_2^{3 \times m}$ that:
  \begin{itemize}
  \item When restricted to any part $\pi$ of $\Pi_i$, $\phi_i$ is of the form
    $\phi_{i,\pi}^3$ for some linear function
    $\phi_{i, \pi} : \F_2^n \to \F_2^m$, and
  \item
    \begin{equation}
      \label{eq:kl-contradiction}
      \dkl  \Big ( \tilde{P}_{\phi_i(X) | \Pi_i(X)} \Big \| P_{\phi_i(X) |
        \Pi_i(X)} \Big ) > \delta.
    \end{equation}
  \end{itemize}

  \jnote{added the following instead of the remark you wanted to add in
    main7-comments.  Seemed easier this way to ensure that all parts have the
    same dimension.} Without loss of generality, we additionally assume that
  each $\phi_{i, \pi}$ is ``full rank'' when restricted to $\pi$.  That is, if
  $\pi$ is an affine shift of $\cV^3$, where $\cV$ has dimension $k$, then the
  restriction of $\phi_{i, \pi}$ to $\cV$ is a linear map of rank $\min(k, m)$.
  It is clear that any $\phi_{i, \pi}$ may be modified to be full rank without
  decreasing the KL divergence of \cref{eq:kl-contradiction}.

  Then by the chain rule for KL divergences,
  \begin{equation}
    \label{eq:kl-chain-rule-1}
    \dkl \Big (
    \tilde{P}_{X | \Pi_i(X), \phi_i(X)} \Big \| P_{X | \Pi_i(X),
      \phi_i(X)}
    \Big )
    <
    \dkl \Big (
    \tilde{P}_{X | \Pi_i(X)} \Big \| P_{X | \Pi_i(X)}
    \Big )
    - \delta.
  \end{equation}
  The left-hand side of \cref{eq:kl-chain-rule-1} is equivalent to
  \[
    \dkl \Big ( \tilde{P}_{X | \Pi_{i+1}(X)} \Big \| P_{X | \Pi_{i+1}(X)} \Big )
  \]
  with
  $\Pi_{i+1} = \big \{ \pi \cap \{x : \phi_i(x) = z\} \big \}_{\pi \in \Pi_i, z
    \in \F_2^{3 \times m}}$, which is an affine partition of dimension at least
  $\dim(\Pi) - m$.

  Thus with the non-negative potential function
  \[
    \Phi(\Pi) \eqdef \dkl \Big ( \tilde{P}_{X | \Pi(X)} \Big \| P_{X |
      \Pi(X)} \Big ),
  \]
  we have $\Phi(\Pi_{i+1}) < \Phi(\Pi_i) - \delta$.  But
  $\Phi(\Pi_0) = - \ln \left ( P(E) \right ) = \Delta$, so there must exist
  $i^\star \le \frac{\Delta}{\delta}$ for which
  \cref{eq:good-partition-KL} holds with $\Pi = \Pi_{i^\star}$, which has
  co-dimension at most $m \cdot \frac{\Delta}{\delta}$.
\end{proof}

\section{Pseudorandomness Preserves Hardness}

\begin{proposition}
  \label{prop:pseudo-hardness}
  Let $\cW \subseteq \F_2^{3 \times n}$ be an affine shift of a linear subspace $\cV^3$
  and let $P$ be a
  the uniform distribution
  on $\{w \in \cW : w_1 + w_2 + w_3 = 0\}$, which we assume to be non-empty.
  Let $X$ denote the identity random variable, let $E = E_1 \times E_2 \times
  E_3$ be an event with $P(X \in E) = e^{-\Delta}$, and define
  $\tilde{P} \eqdef P \big | (X \in E)$.  Suppose that $\tilde{P}_X$ is
  $(\lceil \frac{1}{\delta} \rceil, \delta)$-close to
  $P_X$ as in \cref{def:pseudorandomness}, for $\delta$ satisfying
  $\delta \le \min(\frac{\Delta^2}{32} \cdot e^{-4 \Delta / \epsilon},\
  \frac{\Delta^2}{32e^2},\ 2 \epsilon^2)$.

  Then 
  for each $j \in [n]$, we have
  $v^j(\cG_{\GHZ}^n | \tilde{P}) \le v^j(\cG_{\GHZ}^n | P) + 2 \epsilon$.
\end{proposition}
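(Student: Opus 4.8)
The plan is to fix an arbitrary deterministic product strategy $f=f_1\times f_2\times f_3$ for $\cG_{\GHZ}^n\mid\tilde P$ (legitimate by \cref{fact:randomized-value}) and show $v^j[f](\cG_{\GHZ}^n\mid\tilde P)\le v^j(\cG_{\GHZ}^n\mid P)+2\epsilon$. Since only coordinate $j$ matters, write $g_i:=(f_i)_j:\F_2^n\to\F_2$ and augment each player's answer to $Y_i:=\bigl(g_i(X_i),\,1_{E_i}(X_i)\bigr)\in\F_2^2$, recording membership in $E_i$. I will also split every affine partition below on the functional $x\mapsto x_j$, so that $(X_i)_j$ — hence the GHZ parity $b:=\GHZ\bigl((X_1)_j,(X_2)_j,(X_3)_j\bigr)$ required in coordinate $j$ — is constant on each part.

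\emph{Step 1 (a decorrelating affine partition).} Starting from $\{\cW\}$, I would build by iterative refinement an affine partition $\Pi$ of $\cW$ of the ``power'' form (each part an affine shift of $\cU^3$ for a \emph{common} $\cU\le\cV$), of codimension at most $\lceil1/\delta\rceil$, on which player $1$'s augmented answer is decorrelated from linear functions on average: for every nontrivial character $\chi$ of $\cU$,
\[
 \E_{\cW'\gets P_{\Pi(X)}}\Bigl[\ \textstyle\sum_{y\in\F_2^2}\bigl|\hat 1_{Y_1^{-1}(y)\cap\cW'_1}(\chi)\bigr|\ \Bigr]\ \le\ O(\theta),
\]
for a target $\theta$ of order $\epsilon$ (possibly damped by an exponential in $\Delta$). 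Whenever this fails for some $\chi$, refine every part by $\chi^3$ (the three functionals $x\mapsto\chi(x_1),\chi(x_2),\chi(x_3)$); this is oblivious, so $\Pi$ stays a power partition and is the coset partition of an honest linear map, which Step~2 needs. Termination follows from a potential argument: $\E_{\cW'\gets P_{\Pi(X)}}\bigl[H\bigl(Y_1(X_1)\bigmid X_1\text{ uniform on }\cW'_1\bigr)\bigr]\le 2\ln2$ drops by $\Omega(\theta^2)$ per step, since refining by $\chi^3$ decreases $H(Y_1(X_1)\mid\cW'_1)$ by $I(Y_1(X_1);\chi(X_1)\mid\cW'_1)$, and with $\chi(X_1)$ uniform on $\{\pm1\}$ this mutual information is $\Omega$ of the squared Fourier mass by Pinsker and Jensen, using the identity $\E_y\bigl[\dtv(\chi(X_1)\mid Y_1{=}y,\Unif)\bigr]=\tfrac12\sum_y|\hat1_{Y_1^{-1}(y)\cap\cW'_1}(\chi)|$. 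Thus $\mathrm{codim}(\Pi)=O(1/\theta^2)$, which I arrange to be $\le\lceil1/\delta\rceil$ via $\delta\le2\epsilon^2$ (and the other bounds on $\delta$ when $\theta$ must be damped).

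\emph{Step 2 (transporting pseudorandomness to the parts).} Since $\Pi$ is the coset partition of a linear $\psi:\F_2^n\to\F_2^{c}$ with $c=\mathrm{codim}(\Pi)\le\lceil1/\delta\rceil$, the $(\lceil1/\delta\rceil,\delta)$-closeness hypothesis applied to $\psi$ gives $\dkl(\tilde P_{\Pi(X)}\|P_{\Pi(X)})\le\delta$, hence by Pinsker $\dtv(\tilde P_{\Pi(X)},P_{\Pi(X)})\le\sqrt{\delta/2}\le\epsilon$: the $\tilde P$-weighting of parts is total-variation close to the (uniform) $P$-weighting. Applying the hypothesis to $(\psi,\phi)$ for further functionals $\phi$ and using the chain rule for $\dkl$ similarly shows $\tilde P\mid\cW'$ and $P\mid\cW'$ stay indistinguishable modulo low-rank linear maps on $\tilde P$-typical parts; this is used only to ensure the $1_{E_i}$-coordinate in Step~1 is decorrelable (pseudorandomness forbids $E_1$ from looking like a low-codimension subspace on a typical part).

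\emph{Step 3 (per-part comparison; a strategy for $P$) and conclusion.} On each part $\cW'=\vec w'+\cU^3$, $P\mid\cW'$ is uniform on its GHZ-constrained subset, so \cref{lemma:product-function} with ground space $\cV$, subspace $\cU$, and answers $Y_i$ has exactly the Step~1 termination condition as hypothesis, yielding $\E_{\cW'\gets P_{\Pi(X)}}\bigl[\dtv(P_{Y\mid X\in\cW'},U_{Y\mid X\in\cW'})\bigr]\le O(\theta)$, where $U_{Y\mid X\in\cW'}$ is the law of $Y$ when $X_1,X_2,X_3$ are \emph{independent} and uniform on $\cW'_1,\cW'_2,\cW'_3$. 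Reading off the $1_{E_i}$-coordinates and using that $b$ is constant on $\cW'$, this bounds $\Pr_{\tilde P\mid\cW'}[\text{win}_j]=P_{Y\mid X\in\cW'}(\text{win}_j\wedge E)\big/P_{Y\mid X\in\cW'}(E)$ against its independent counterpart $W(\cW'):=\Pr\bigl[g_1(X_1)\oplus g_2(X_2)\oplus g_3(X_3)=b\bigr]$ with $X_1,X_2,X_3$ independent and uniform on $\cW'_1\cap E_1,\cW'_2\cap E_2,\cW'_3\cap E_3$, the error being the above TV distance divided by $U_{Y\mid X\in\cW'}(E)=\prod_i|E_i\cap\cW'_i|/|\cU|$. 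Crucially, $W(\cW')$ is the coordinate-$j$ success probability, under $P\mid\cW'$, of the randomized strategy $h$ in which each player — knowing its $\cU$-coset from its input and $\psi$ — resamples its input uniformly from the $E_i$-part of that coset (private randomness) and answers $g_i$ of the resample: given $\cW'$ the three resamples are independent, and they lie in the same coset, so the coordinate-$j$ question is unchanged. Hence $\E_{\cW'\gets P_{\Pi(X)}}[W(\cW')]=v^j[h](\cG_{\GHZ}^n\mid P)\le v^j(\cG_{\GHZ}^n\mid P)$, and, combining with Steps~1--2 and $W(\cW')\in[0,1]$,
\[
 v^j[f](\cG_{\GHZ}^n\mid\tilde P)=\E_{\cW'\gets\tilde P_{\Pi(X)}}\!\bigl[\Pr_{\tilde P\mid\cW'}[\text{win}_j]\bigr]
 \le\E_{\cW'\gets\tilde P_{\Pi(X)}}[W(\cW')]+\mathrm{err}
 \le v^j(\cG_{\GHZ}^n\mid P)+\dtv(\tilde P_{\Pi(X)},P_{\Pi(X)})+\mathrm{err},
\]
which is $\le v^j(\cG_{\GHZ}^n\mid P)+\epsilon+\mathrm{err}$. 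The main obstacle is making $\mathrm{err}\le\epsilon$: the natural bound reweights the Step~3 per-part errors from the $\tilde P$- to the $P$-weighting of parts (costing $e^{\Delta}$) and divides by the conditional masses $U_{Y\mid X\in\cW'}(E)$, which on atypical parts can be far below their average $e^{-\Delta}$. Isolating those rare bad parts and choosing $\theta$ small enough to absorb the rest is precisely what the hypotheses $\delta\le\min\bigl(\tfrac{\Delta^2}{32}e^{-4\Delta/\epsilon},\tfrac{\Delta^2}{32e^2},2\epsilon^2\bigr)$ are calibrated for; the remaining subtleties (decorrelating the $1_{E_i}$-coordinate within $O(1/\theta^2)$ oblivious splits, and keeping all partitions of the clean power form) are routine bookkeeping given the pseudorandomness hypothesis.
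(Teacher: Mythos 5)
Your skeleton is essentially the paper's own proof: a strategy-dependent iterative refinement producing a subspace $\cU$ (of codimension roughly $1/\delta$, containing the split on $x\mapsto x^j$) on which player 1's answer is Fourier-decorrelated (this is \cref{claim:second-partition}, with your TV-type threshold and $O(1/\theta^2)$ potential argument in place of the paper's KL threshold), followed by \cref{lemma:product-function} to get approximate independence of the answers given the part under $P$, a transfer between $P$- and $\tilde P$-weighted parts via the $(\lceil 1/\delta\rceil,\delta)$-closeness hypothesis and Pinsker, and finally exactly the paper's randomized strategy for $P$ (each player resamples its input from $E_i$ within its $\cU$-coset and answers accordingly), bounded by \cref{fact:randomized-value}. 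Your two genuine variants—carrying $1_{E_i}(X_i)$ as an extra answer bit and comparing conditional ratios, rather than conditioning the answer distributions on $E$ via \cref{fact:conditioned-dtv}—are fine and change nothing essential.

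The gap is that you defer precisely the step where all three hypotheses on $\delta$ are actually used. You correctly identify that the per-part error gets divided by the conditional mass of $E$ on the part and that this mass can be far below $e^{-\Delta}$ on some parts, but you give no argument that such parts are rare \emph{under the $\tilde P$-weighting of parts}, and "routine bookkeeping" does not produce one: under $P$-weighting Markov goes the wrong way. The needed idea is the identity $\E_{x\gets\tilde P_X}\bigl[-\ln P(E\mid X\in x+\cU^3)\bigr]=\E_{x\gets\tilde P_X}\bigl[\dkl(\tilde P_{X\mid X\in x+\cU^3}\,\|\,P_{X\mid X\in x+\cU^3})\bigr]\le\dkl(\tilde P_X\|P_X)=\Delta$ (via \cref{fact:conditional-kl}), then Markov to get $\Pr_{\tilde P}[P(E\mid\cdot)\le\tau]\le\Delta/\ln(1/\tau)$, and then the optimization over the cutoff $\tau$ (the paper's \cref{cor:optimum-tau}); this is the only place the constraint $\delta\le\frac{\Delta^2}{32}e^{-4\Delta/\epsilon}$ enters, so asserting the hypotheses are "calibrated for" it is circular without this computation. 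Relatedly, the $e^{\Delta}$ reweighting you propose for passing from $P$- to $\tilde P$-weighted parts is both the wrong move and insufficient: it is unnecessary because your own Step 2 gives $\dtv(\tilde P_{\Pi(X)},P_{\Pi(X)})\le\sqrt{\delta/2}$, which transfers the (capped-at-$1$) per-part errors at additive cost $O(\sqrt\delta)$; and it cannot replace the Markov argument, since no reweighting yields a per-part lower bound on $P(E\mid\cW')$ or $U(E\mid\cW')$. Finally, you assert but do not check the compatibility between your two uses of $\theta$: codimension $O(1/\theta^2)\le\lceil 1/\delta\rceil$ forces $\theta=\Omega(\sqrt\delta)$, and one must verify that $\Delta/\ln(1/\tau)+O(\theta)/\tau$ can still be pushed below $O(\epsilon)$—it can, with $\theta=\Theta(\sqrt\delta)$ this reproduces the paper's bound up to constants, but that verification (not the decorrelation bookkeeping) is the substance of the proposition.
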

  \begin{proof}
    Fix $j \in [n]$ to be any coordinate, and let
    $\tilde{f} = \tilde{f}_1 \times \tilde{f}_2 \times \tilde{f}_3 : \cW \to
    \F_2^3$ be an arbitrary strategy.  Let $Y$ denote $\tilde{f}(X)$.  

    \begin{claim}
      \label{claim:second-partition}
      There exists a subspace $\cU \le \cV$ of codimension at most
      $\lceil \frac{1}{\delta} \rceil$ such that:
      \begin{itemize}
      \item The $j^{th}$ coordinate $x^j$ of $x \in \F_2^{3 \times n}$ depends only
        on $x +{\cU^3}$.
      \item For all $\chi \in \hat{\cU}$,
        \[
          \E_{(x, y) \gets P_{X, Y}} \left [\dkl \Big ( P_{\chi(X_1) | X \in x +
              \cU^3, Y_1 = y_1} \Big \| U_{\chi(X_1) | X \in x + \cU^3} \Big )
          \right ] \le \delta,
      \]
      where $U$ denotes the uniform distribution on $\cW$.
      \end{itemize}
    \end{claim}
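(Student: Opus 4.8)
The plan is to recast the second requirement of the claim as a statement about a single conditional mutual information, and then verify it by iteratively shrinking $\cU$ while tracking a potential that strictly decreases by more than $\delta$ at each step.

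First I would record the structural fact that, under $P$ (and likewise under $U$), conditioning on any coset $x+\cU^3$ that meets $\{w\in\cW:\ w_1+w_2+w_3=0\}$ leaves $X_1$ uniform on $x_1+\cU$: since $\cW$ is affine and the constraint $w_1+w_2+w_3=0$ is $\F_2$-linear, each value of $X_1$ in its coset admits the same number of completions $(X_2,X_3)$. Hence, for nontrivial $\chi\in\hat\cU$, the bit $\chi(X_1)$ — which, after fixing an extension $\tilde\chi:\cV\to\F_2$ of $\chi$, is determined up to an additive constant depending on the coset — is uniform on $\{\pm1\}$ given $X+\cU^3$, so $U_{\chi(X_1)\mid X\in x+\cU^3}$ is just the uniform bit and $\dkl(p\,\|\,\Unif)=\ln 2 - H(p)$. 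Averaging over $(x,y)\gets P_{X,Y}$ and using $H_P\big(\chi(X_1)\mid X+\cU^3\big)=\ln 2$, the second requirement becomes: for every nontrivial $\chi\in\hat\cU$,
\[
  I_P\big(\chi(X_1)\,;\,Y_1 \ \big|\ X+\cU^3\big)\ \le\ \delta,
\]
the trivial character contributing $0$, and all quantities here being insensitive to the additive-constant ambiguity above.

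Next I would build $\cU$ by refinement. Start from $\cU_0\eqdef\{v\in\cV:\ v^j=0\}$, which has codimension at most $1$ in $\cV$ and makes the first requirement hold — a property inherited by every subspace of $\cU_0$. Given a current $\cU$ that fails the reformulated bound, with witness $\chi\in\hat\cU$ (necessarily nontrivial), fix any extension $\tilde\chi:\cV\to\F_2$ and pass to $\cU'\eqdef\cU\cap\ker\tilde\chi$, raising the codimension in $\cV$ by exactly one; concretely this replaces the partition of $\cW$ into cosets of $\cU^3$ by the partition into cosets of $\cU'^3$, and cosets of $P$-measure zero are harmless since every quantity in sight is a $P$-weighted average over cosets. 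For the potential I would take
\[
  \Phi(\cU)\ \eqdef\ H_P\big(Y_1 \ \big|\ X+\cU^3\big)\ =\ I_P\big(X_1\,;\,Y_1 \ \big|\ X+\cU^3\big),
\]
the equality because $Y_1=\tilde{f}_1(X_1)$ is a deterministic function of $X_1$. Writing $C_\cU=X+\cU^3$ and $C_{\cU'}=X+\cU'^3$: as $C_{\cU'}$ refines $C_\cU$, we get $\Phi(\cU)-\Phi(\cU')=I_P(Y_1\,;\,C_{\cU'}\mid C_\cU)$, and given $C_\cU$ the extra information in $C_{\cU'}$ is precisely an affine image of $(\tilde\chi(X_1),\tilde\chi(X_2),\tilde\chi(X_3))$, which on the support of $P$ collapses to $(\tilde\chi(X_1),\tilde\chi(X_2))$ because $\tilde\chi(X_1)+\tilde\chi(X_2)+\tilde\chi(X_3)=\tilde\chi(0)=0$. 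Hence, by the chain rule and nonnegativity of conditional mutual information,
\[
  \Phi(\cU)-\Phi(\cU')\ =\ I_P\big(Y_1\,;\,\tilde\chi(X_1),\tilde\chi(X_2)\mid C_\cU\big)\ \ge\ I_P\big(Y_1\,;\,\tilde\chi(X_1)\mid C_\cU\big)\ =\ I_P\big(\chi(X_1)\,;\,Y_1\mid C_\cU\big)\ >\ \delta .
\]
Since $\Phi\ge 0$ and $\Phi(\cU_0)\le H_P(Y_1)\le \ln 2 < 1$, the refinement terminates after at most $\lceil 1/\delta\rceil-1$ steps, so the final $\cU$ has codimension at most $1+(\lceil 1/\delta\rceil-1)=\lceil 1/\delta\rceil$ in $\cV$ and meets both requirements.

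I expect the main obstacle to be the bookkeeping in the first paragraph: cleanly establishing that $\chi(X_1)$ is uniform given the coset (which uses the $\F_2$-linearity of the GHZ constraint and the fact that we only ever condition on cosets intersecting $\{\sum_i w_i=0\}$), and being careful that $\chi\in\hat\cU$ evaluated at $X_1$ — whose coset, not $X_1$ itself, lies in $\cU$ — is only well-defined up to a coset-dependent additive constant; fortunately every divergence and mutual-information quantity involved is invariant under such a relabeling, so this causes no real trouble. After that the potential-decrease computation is routine information theory, its only genuinely GHZ-specific ingredient being the collapse of $(\tilde\chi(X_1),\tilde\chi(X_2),\tilde\chi(X_3))$ to two coordinates on the support of $P$.
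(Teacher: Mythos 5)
Your proof is correct and takes essentially the same route as the paper: start from $\cU_0=\{u\in\cV:\ u^j=0\}$, repeatedly intersect with the kernel of a violating character, and track a non-negative entropy potential that must drop by more than $\delta$ at every refinement (the paper's potential is $\dim(\cU)-H_P(X_1\mid X_1+\cU,\,Y_1)$, i.e.\ effectively $H_P(Y_1\mid X_1+\cU)$, while yours is $H_P(Y_1\mid X+\cU^3)$). The only cosmetic difference is that, by conditioning on the full coset of $X$, your decrease step routes through $(\tilde\chi(X_1),\tilde\chi(X_2),\tilde\chi(X_3))$ and the GHZ collapse to two coordinates, which monotonicity of conditional mutual information already renders unnecessary and which the paper's player-one-only potential avoids altogether.
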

    \begin{proof}
      Start with $\cU_1 = \{u \in \cV : u^j = 0\}$ (this ensures that any
      subspace $\cU \le \cU_1$ satisfies the first desired property).  Define a
      potential function
      \[
        Z(\cU) \eqdef \dim(\cU) - \E_{(x, y) \gets P_{X, Y}} \big [ H(X_1 | X_1
          \in x_1
          + \cU, Y_1 = y_1) \big ],
      \]
      which is clearly non-negative.  Additionally, $Z(\cU)$ (and in particular
      $Z(\cU_1)$) is at most $1$ because for
      any subspace $\cU \le \cV$ and any $x_1 \in \cV$, the entropy chain rule
      implies
      \begin{align*}
        \E_{y \gets P_{Y | X_1 \in x_1 + \cU}} \big [ H(X_1 | X_1 \in x_1 +
        \cU, Y_1 = y_1) \big ] & = H(X_1 | X_1 \in x_1 + \cU) - H(Y_1 | X_1
                                   \in x_1 + \cU) \\
                                 & \ge \dim(\cU) - 1.
      \end{align*}
      (in the first step we used the fact that $Y_1$ is a function of $X_1$.

      For $i \ge 1$, define $\chi_{i} \in \hat{\cU}_{i} \setminus \{1\}$ to maximize
      \begin{align*}
        b_i & \eqdef \E_{(x,y) \gets P_{X,Y}} \Big [ \dkl \Big ( P_{\chi_i(X_1) |
              X \in x + \cU_{i}^3, Y_1 = y_1} \Big
              \| U_{\chi_i(X_1) | X \in x + \cU_i^3} \Big ) \Big ] \\
            & = \E_{(x,y) \gets P_{X,Y}} \Big [ \dkl \Big ( P_{\chi_i(X_1) |
              X \in x + \cU_{i}^3, Y_1 = y_1} \Big
              \| \Unif_{\pmone} \Big ) \Big ] \\
            & = \E_{(x,y) \gets P_{X,Y}} \Big [ \dkl \Big ( P_{\chi_i(X_1) | X_1
              \in x_1 + \cU_{i}, Y_1 = y_1} \Big
              \| \Unif_{\pmone} \Big ) \Big ]\\
            & = 1 - \E_{(x,y) \gets P_{X,Y}} \Big [ H \big (\chi_i(X_1)|X_1 \in
              x_1 + \cU_i, Y_1 = y_1 \big ) \Big ],
      \end{align*}
      and define $\cU_{i+1} \eqdef \{u \in \cU_i : \chi_i(u) = 1\}$.  By the
      entropy chain rule, we have $Z(\cU_{i+1}) \le Z(\cU_i) - b_i$.  \jnote{I
        know I should elaborate here / write it out...}

      Since the initial
      potential is at most $1$, and all potentials are at least $0$, there must
      be some $i^\star \le \lceil \frac{1}{\delta} \rceil$ for which
      $b_{i^\star} \le \delta$.  The corresponding $\cU_{i^\star}$ is the desired subspace of
      $\cV$.
    \end{proof}

    Now let $\cU$ be as given by \cref{claim:second-partition}.  By
    \cref{lemma:product-function}, we have
    \[
    \E_{x \gets P_X} \left [ \dtv \big ( P_{Y | X \in x + \cU^3} ,
      \prod_{i \in [3]} \ P_{Y_i | X_i \in x_i + \cU} \big ) \right ] \le
    \sqrt{2 \delta}.
  \]

  By assumption of \cref{prop:pseudo-hardness} (together with Pinsker's
  inequality), $P_{X + \cU^3}$ and $\tilde{P}_{X + \cU^3}$ are
  $\sqrt{\frac{\delta}{2}}$-close in total variational distance.  We thus have
  that
  \begin{equation}
    \label{eq:first}
    \E_{x \gets \tilde{P}_X} \left [ \dtv \big ( P_{Y | X \in x + \cU^3} ,
      \prod_{i \in [3]} \ P_{Y_i | X_i \in x_i + \cU} \big ) \right ] \le
    \sqrt{8 \delta},
  \end{equation}
  by the general fact that if $P$ and $Q$ are two distributions that are
  $\epsilon$-close in total variational distance, and if $X$ is a $B$-bounded
  random variable, then $\big | \E_P[X] - \E_Q[X] \big | \le 2 B \epsilon$.

    We now obtain a probabilistic lower bound on $P(E | X + \cU^3)$.  We first
    lower bound its log-expectation:
    \begin{align*}
      \E_{x \gets \tilde{P}_X} \Big [ - \ln P \big ( E | X \in x + \cU^3 \big ) \Big
      ]
      & = \E_{x \gets \tilde{P}_X} \Big [ \dkl \big ( \tilde{P}_{X | X \in x + \cU^3} \|
        P_{X | X \in x + \cU^3} \big ) \Big ] \\
      & \le \dkl \big ( \tilde{P}_{X} \|
        P_{X} \big ) && \text{(\cref{fact:conditional-kl})}\\
      & \le \Delta.
    \end{align*}
    Markov's inequality then implies that for any $\tau$,
    \begin{equation}
      \label{eq:markov-lb}
      \Pr_{x \gets \tilde{P}_X} \big [ P ( E | X \in x +
      \cU^3) \le \tau \big ] \le \frac{\Delta}{\ln(1 / \tau)}.
    \end{equation}

    Combining \cref{eq:markov-lb} with \cref{eq:first} and \cref{fact:conditioned-dtv}, we get
    \[
      \E_{x \gets \tilde{P}_{X}} \left [ \dtv \big ( \tilde{P}_{Y | X \in x +
          \cU^3} , \prod_{i \in [3]} \ (P | X_i \in E_i)_{Y_i | X_i \in x_i +
          \cU} \big ) \right ] \le \frac{\Delta}{\ln(1/\tau)} + \frac{4 \sqrt{2
          \delta}}{\tau}.
    \]
    Since this holds for all $\tau \in [0,1]$ and because $\delta \le \frac{\Delta^2}{32e^2}$, \cref{cor:optimum-tau} implies
    that
    \begin{align}
      \label{eq:condition-stat-distance}
      \E_{x \gets \tilde{P}_{X}} \left [ \dtv \big ( \tilde{P}_{Y | X \in x + \cU^3} ,
      \prod_{i \in [3]} \ (P | X_i \in E_i)_{Y_i | X_i \in x_i + \cU} \big ) \right ] \le
      \frac{4 \Delta}{\ln \left ( \frac{\Delta}{\sqrt{32 \delta}} \right )} \le \epsilon,
    \end{align}
    where the last inequality follows from our assumption that $\delta \le
    \frac{\Delta^2}{32} \cdot e^{-4 \Delta / \epsilon}$.

    Putting everything together, we have
    \begin{align*}
      \tilde{P}_{X + \cU^3,Y}
      & = \tilde{P}_{X + \cU^3} \tilde{P}_{Y | X + \cU^3} \\
      & \approx_{\epsilon} \tilde{P}_{X + \cU^3} \cdot \prod_{i \in [3]}
        (P | X_i \in E_i)_{Y_i | X_i + \cU} \\
      & \approx_{\sqrt{\frac{\delta}{2}}} P_{X + \cU^3} \cdot \prod_{i \in [3]}
        (P | X_i \in E_i)_{Y_i | X_i + \cU},
    \end{align*}
    where $\approx$ denotes closeness in total variational distance.

    But
    $P_{X + \cU^3} \cdot \prod_{i \in [3]} (P|X_i \in E_i)_{Y_i | X_i + \cU}$ is
    just the distribution on $(x + \cU^3, y)$ obtained by sampling
    $x \gets P_X$, $y \gets F(x)$, where $F = F_1 \times F_2 \times F_3$ is the
    following randomized strategy.  On input $x_i$, $F_i$ uses local randomness
    to sample and output $y_i \gets (P|X_i \in E_i)_{Y_i | X_i \in x_i + \cU}$.
    By \cref{fact:randomized-value}, the probability that $W(x^j, y) = 1$ (which
    is well-defined because $x^j$ is a function of $x + \cU^3$) is at most
    $v^j(\cG_{\GHZ}^n | P)$.

    We thus have
    \begin{align*}
      v^j[\tilde{f}](\cG_{\GHZ}^n|\tilde{P})
      &= \tilde{P}_{X + \cU^3, Y}\big (W(X^j,
        Y)= 1 \big ) \\
      &\le v^j(\cG_{\GHZ}^n | P) + \epsilon + \sqrt{\frac{\delta}{2}} \\
      &\le
        v^j(\cG_{\GHZ}^n | P) + 2 \epsilon.
    \end{align*}

    Since this holds for arbitrary $\tilde{f}$, we have $v^j (\cG_{\GHZ}^n |
    \tilde{P}) \le v^j(\cG_{\GHZ}^n | P) + 2 \epsilon$.
  \end{proof}

\section{Proof of Main Theorem}
\begin{theorem}
  If $\cG = (\cX, \cY, Q, W)$ denotes the GHZ game, then
  $v(\cG^n) \le n^{-\Omega(1)}$.
\end{theorem}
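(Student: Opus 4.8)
The plan is to fix an optimal \emph{deterministic} strategy $f = f_1 \times f_2 \times f_3$ for $\cG_{\GHZ}^n$ (which is no loss by \cref{fact:randomized-value}) and certify that $v[f](\cG_{\GHZ}^n)$ is small by building a bounded-depth tree of product events. The root is labeled by the trivial event $\F_2^{3\times n}$, and $P \eqdef Q_{\GHZ}^n$. A node labeled by a product event $E$ with $P(E) = e^{-\Delta}$ is a leaf if $\Delta > \Delta_{\max} \eqdef \Theta(\log n)$ or if its depth has reached $T \eqdef \Theta(\log n)$; otherwise we will exhibit (in the next paragraph) a coordinate $j^\star = j^\star(E)$, unused on the path from the root to this node, with $v^{j^\star}(\cG_{\GHZ}^n | (P|E)) \le \tfrac34 + \eta$ for a small absolute constant $\eta$, and we give the node one child $E \cap \{x^{j^\star} = \alpha\} \cap \{f(x)^{j^\star} = \beta\}$ for each \emph{winning} query/answer configuration $(\alpha,\beta)$ in coordinate $j^\star$ (each such set is again a product event, being a single-player constraint on each player) and a leaf for each \emph{losing} configuration. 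Since the coordinate-$j^\star$ win predicate is determined by $(\alpha,\beta)$ and $v^{j^\star}[f](\cG_{\GHZ}^n | (P|E)) \le v^{j^\star}(\cG_{\GHZ}^n | (P|E))$, induction on depth shows that the total $P$-mass of the active nodes at depth $t$ is at most $(\tfrac34+\eta)^t$. If $f$ wins in \emph{every} coordinate of $\cG_{\GHZ}^n$, then the trace of a random $x \gets P$ down the tree stays among winning children until its branch is truncated, so it ends at a depth-$T$ active node or at an early-truncated node; hence $v[f](\cG_{\GHZ}^n)$ is at most the total mass of the depth-$T$ active nodes, which are pairwise disjoint so contribute at most $(\tfrac34+\eta)^T$, plus the mass of the $\le 32^{T+1}$ early-truncated nodes, which is at most $32^{T+1} e^{-\Delta_{\max}}$. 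With $T, \Delta_{\max} = \Theta(\log n)$ and $\eta$ a small constant, both terms are $n^{-\Omega(1)}$.

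The heart of the matter is producing $j^\star(E)$; write $\tilde P \eqdef P|E$, so $P(E) = e^{-\Delta}$ with $\Delta \le \Delta_{\max}$. If $\Delta$ is below a small constant, then by Pinsker $\tilde P$ is within $\sqrt{\Delta/2}$ of $P = Q_{\GHZ}^n$ in statistical distance; since $Q_{\GHZ}$ is trivially locally embeddable in every coordinate of the product distribution $Q_{\GHZ}^n$ we have $v^j(\cG_{\GHZ}^n|P) \le v(\cG_{\GHZ}) = \tfrac34$ for all $j$, hence $v^j(\cG_{\GHZ}^n|\tilde P) \le \tfrac34 + \eta$ for any unused $j$. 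Otherwise, apply \cref{lemma:pseudorandom-partition} with parameters $m,\delta$ (fixed below) to get an affine partition $\Pi$ of codimension at most $m\Delta/\delta$ with $\E_{\pi \gets \tilde P_{\Pi(X)}}[d_m(\tilde P_{X|X\in\pi}\,\|\,P_{X|X\in\pi})] \le \delta$. By Markov (using $\E_{\pi \gets \tilde P_{\Pi(X)}}[-\ln P(E|X\in\pi)] = \E_\pi[\dkl(\tilde P_{X|X\in\pi}\|P_{X|X\in\pi})] \le \Delta$), a $1-o(1)$ mass fraction of parts $\pi$ simultaneously have $d_m(\tilde P_{X|X\in\pi}\|P_{X|X\in\pi})$ small enough and $-\ln P(E|X\in\pi) = O(\Delta)$ small enough that \cref{prop:pseudo-hardness} applies with $\cW = \pi$ (note $\tilde P_{X|X\in\pi} = P_{X|X\in\pi} | E$ and $P_{X|X\in\pi}$ is uniform on the GHZ-consistent points of $\pi$), which gives, for every coordinate $j$ locally embeddable in $Q_{\GHZ}^n|\pi$,
\[
  v^j\big(\cG_{\GHZ}^n \mid \tilde P_{X|X\in\pi}\big) \le v^j\big(\cG_{\GHZ}^n \mid P_{X|X\in\pi}\big) + 2\epsilon \le \tfrac34 + 2\epsilon,
\]
the last inequality because local embeddability in coordinate $j$ turns any $\cG_{\GHZ}^n$-strategy into a GHZ strategy of the same coordinate-$j$ value, which is at most $v(\cG_{\GHZ}) = \tfrac34$. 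By \cref{prop:linear-case}, each such $\pi$ has at least $n - \operatorname{codim}(\Pi)$ locally embeddable coordinates, so averaging over $\pi$ there are at least $n - o(n)$ coordinates $j$ that are locally embeddable in a $1-o(1)$ mass fraction of parts; for any such $j$, combining the display above with $v^j(\cG_{\GHZ}^n|\tilde P) \le \E_{\pi \gets \tilde P_{\Pi(X)}}[v^j(\cG_{\GHZ}^n | \tilde P_{X|X\in\pi})]$ gives $v^j(\cG_{\GHZ}^n|\tilde P) \le \tfrac34 + 2\epsilon + o(1)$. Since $T = O(\log n)$ is far below $n$, one such $j$ is unused on the path; take it for $j^\star(E)$ and let $\eta$ absorb $2\epsilon + o(1)$.

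Finally, the constants must be chosen consistently. Fix $\epsilon$ a small constant and set $T = C\log n$, $\Delta_{\max} = \Theta(\log n)$ so that both leaf contributions are $n^{-\Omega(1)}$. Because \cref{prop:pseudo-hardness} must be invoked on parts where $-\ln P(E|X\in\pi)$ is as large as $\Theta(\Delta_{\max}) = \Theta(\log n)$, its hypothesis forces $\delta \le \exp(-\Theta(\Delta_{\max}/\epsilon)) = n^{-\Theta(1/\epsilon)}$, and then $m = \lceil 1/\delta\rceil = n^{\Theta(1/\epsilon)}$ and $\operatorname{codim}(\Pi) \le m\Delta/\delta = n^{\Theta(1/\epsilon)}$. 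The one genuine tension is that we need $\operatorname{codim}(\Pi) \ll n$, both so that \cref{prop:linear-case} supplies many embeddable coordinates and so that the $o(1)$ error terms are genuinely small; this pins $C$ — and hence the exponent in the final bound $v(\cG_{\GHZ}^n) \le n^{-\Omega(1)}$ — to a sufficiently small constant depending on $\epsilon$, which is exactly why this method yields a polynomial rather than an exponential rate. I expect this parameter balancing, together with the routine but fiddly bookkeeping of the tree (truncation, ensuring a fresh coordinate is always available, the low-deficit base case), to be the main obstacle; each analytic ingredient on which it rests is already supplied by \cref{prop:linear-case,lemma:pseudorandom-partition,prop:pseudo-hardness}.
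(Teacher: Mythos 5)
Your proposal is correct and follows essentially the same route as the paper: the core claim that every sufficiently dense product event leaves some coordinate with value bounded away from $1$ is obtained exactly as in the paper, by combining \cref{lemma:pseudorandom-partition}, \cref{prop:pseudo-hardness} and \cref{prop:linear-case} via Markov and averaging, in the same parameter regime ($\Delta = \Theta(\log n)$, $\delta$ and $m$ polynomial in $n$, which is what caps the final rate at polynomial). Your tree of product events with truncation at mass $e^{-\Delta_{\max}}$ and depth $T = \Theta(\log n)$ is just an unrolled form of the paper's parallel repetition criterion (\cref{lemma:parrep-criterion}), with your $32^{T+1}e^{-\Delta_{\max}}$ term playing the role of the $|\cX|^i \cdot |\cY|^i \cdot \rho(n)$ term there, and the remaining differences (the Pinsker base case for tiny $\Delta$, insisting on an unused coordinate, averaging over coordinates per part rather than over parts per coordinate) are cosmetic.
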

\begin{proof}
  Recall $v(\cG) = 3/4$.

  Let $P$ denote $Q^n$; that is $P$ is uniform on
  $\big \{(X_1, X_2, X_3) \in \F_2^{3 \times n} : X_1 + X_2 + X_3 = 0\big \}$.
  Let $E = E_1 \times E_2 \times E_3$ be any product event in
  $\F_2^{3 \times n}$ with $P(E) \ge e^{-\Delta}$ (where $\Delta$ is a parameter we
  will specify later), and let $\tilde{P}$ denote $P | E$.

  Let $\delta > 0$ be a parameter we will specify later, and let
  $m = \lceil \frac{1}{\delta} \rceil$.  Recall our definition of $d_m$
  (\cref{def:pseudorandomness}).  \Cref{lemma:pseudorandom-partition} states
  that there exists an affine partition $\Pi$ of $\F_2^{3 \times n}$, of
  codimension at most $m \cdot \frac{\Delta}{\delta}$, such that:
  \[
    \E_{\pi \gets \tilde{P}_{\Pi(X)}} \left [ d_m \Big ( \tilde{P}_{X | X \in
        \pi} \Big \| P_{X | X \in \pi} \Big ) \right ] \le \delta.
  \]
  Moreover,
  \begin{align*}
    \E_{\pi \gets \tilde{P}_{\Pi(X)}} \left [
    d_{\infty} \Big ( \tilde{P}_{X | X \in
    \pi} \Big \| P_{X | X \in \pi} \Big ) \right ] & = \dkl \Big ( \tilde{P}_{X | \Pi(X)} \Big \| P_{X | \Pi(X)} \Big ) \\
                                                   & \le \dkl \big (\tilde{P}_X
                                                     \| P_X \big ) \\
                                                   & \le \Delta.
  \end{align*}
  Markov's inequality thus implies that with probability at least $1/3$ when
  sampling $\pi \gets \tilde{P}_{\Pi(X)}$, it holds that
  $d_m \Big ( \tilde{P}_{X | X \in \pi} \Big \| P_{X | X \in \pi} \Big ) \le 3
  \delta$ and
  $d_\infty \Big ( \tilde{P}_{X | X \in \pi} \Big \| P_{X | X \in \pi} \Big )
  \le 3 \Delta$.  Call such a $\pi$ pseudorandom, and let $\cR$ denote the set
  of pseudorandom $\pi$.

  By \cref{prop:pseudo-hardness}, for each pseudorandom $\pi$ we have
  \begin{equation}
    \label{eq:relative-vj}
    v^j \big ( \cG^n | (\tilde{P}|\pi) \big ) \le v^j\big (\cG^n | (P | \pi) \big
    ) + 2 \epsilon
  \end{equation}
  as long as
  \begin{equation}
    \label{eq:delta-constraints}
    3 \delta \le \min(\frac{9\Delta^2}{32} \cdot e^{-12 \Delta / \epsilon},\
    \frac{9\Delta^2}{32e^2},\ 2 \epsilon^2),
  \end{equation}
  where $\epsilon$ is a parameter we will specify later.

  By \cref{prop:linear-case}, for each $\pi \in \Pi$ (with $P(\pi) > 0$), it
  holds for all but $m \cdot \frac{\Delta}{\delta}$ values of $j \in [n]$, we
  have $v^j \big (\cG^n \big | (P | \pi) \big ) = v(\cG) = 3/4$.  By averaging,
  there exists some $j^\star \in [n]$ such that
  \[
    \E_{\pi \gets \tilde{P}_{\Pi(X)|\Pi(X) \in \cR}} \left [
      v^{j^\star} \big (\cG^n \big | (P | \pi) \big )
    \right ] \le \frac{m \Delta}{n \delta} + \left (1 - \frac{m \Delta}{n
        \delta} \right ) \cdot \frac{3}{4},
  \]
  which is at most $7/8$ if
  \begin{equation}
    \label{eq:delta-constraint2}
    \delta \ge \frac{2 m \Delta}{n}.
  \end{equation}

  Putting everything together, we have
  \begin{align*}
    v^{j^\star} \big (\cG^n | \tilde{P} \big )
    & \le \E_{\pi \gets \tilde{P}_{\Pi(X)}} \Big [
      v^{j^\star} \big ( \cG^n | (\tilde{P}|\pi) \big ) \Big ] \\
    & \le \Pr_{\pi \gets \tilde{P}_{\Pi(X)}} \left [ \pi \notin \cR \right ] +
      \Pr_{\pi \gets \tilde{P}_{\Pi(X)}} \left [ \pi \in \cR \right ] \cdot \E_{\pi \gets \tilde{P}_{\Pi(X)|\Pi(X) \in \cR}} \Big [
      v^{j^\star} \big ( \cG^n | (\tilde{P}|\pi) \big ) \Big ] \\
    & \le \frac{2}{3} + \frac{1}{3} \cdot (\frac{7}{8} + 2 \epsilon) \\
    & \le \frac{47}{48}
  \end{align*}
  if \cref{eq:delta-constraints,eq:delta-constraint2} are satisfied and if
  $\epsilon \le \frac{1}{32}$.  Setting $\epsilon = \frac{1}{32}$,
  $\Delta = 0.0005 \ln n$, $\delta = n^{-0.4}$, $m = n^{0.4}$ ensures that these
  constraints are all satisfied for sufficiently large $n$.

  Applying \cref{lemma:parrep-criterion} below with $\rho(n) = n^{-0.0005}$ and
  $\epsilon = \frac{1}{48}$ completes the proof.
\end{proof}
\begin{lemma}[Parallel Repetition Criterion]
    \label{lemma:parrep-criterion}
    Let $\cG = (\cX, \cY, Q, W)$ be a game, and let $P$ denote $Q^n$.  Suppose
    $\rho : \Z^+ \to \bbR$ is a function with $\rho(n) \ge e^{-O(n)}$ and
    $\epsilon > 0$ is a constant such that for all
    $E = E_1 \times \cdots E_k \subseteq \cX^n$ with $P^n(E) \ge \rho(n)$ there
    exists $j$ such that $v^j\big (\cG^n | (P | E) \big ) \le 1 - \epsilon$.  Then
    \[
      v(\cG^n) \le \rho(n)^{\Omega(1)}.
    \]
  \end{lemma}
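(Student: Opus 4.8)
The plan is an iterated-conditioning argument. Assume $v(\cG^n) = v > 0$ (otherwise there is nothing to prove), and fix an optimal deterministic product strategy $f = f_1 \times \cdots \times f_k : \cX^n \to \cY^n$ for $\cG^n$ (deterministic strategies suffice, by \cref{fact:randomized-value}); let $\Win \subseteq \cX^n$ denote the event, over $x \gets P$, that $W^n(x,f(x))=1$, so $P(\Win)=v$. I would build a decreasing chain of product events $\cX^n = E_0 \supseteq E_1 \supseteq \cdots$ and a strictly increasing chain of coordinate sets $\emptyset = C_0 \subsetneq C_1 \subsetneq \cdots \subseteq [n]$ with $|C_t| = t$, maintaining the invariant that conditioned on $E_t$ the strategy $f$ wins every coordinate in $C_t$ with probability $1$. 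Since $\Win$ is contained in the event ``$f$ wins coordinate $j$'' for each $j$, this invariant means that, conditioned on $E_t$, the event $\Win$ coincides with ``$f$ wins every coordinate outside $C_t$''. I run the process as long as $t < n$ and $P(E_t) \ge \rho(n)$.

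For the inductive step, suppose $E_t$ is a product event with $P(E_t) \ge \rho(n)$. The hypothesis yields a coordinate $j = j_{t+1}$ with $v^j\big(\cG^n \mid (P \mid E_t)\big) \le 1 - \epsilon$, and hence $P\big(f \text{ wins coord } j \mid E_t\big) \le 1 - \epsilon$; necessarily $j \notin C_t$, since every coordinate in $C_t$ is won with probability $1$. The event ``$f$ wins coordinate $j$'' is not itself a product event, but it is the \emph{disjoint} union, over the winning transcripts $(q,a) \in \cX \times \cY$ with $W(q,a) = 1$ (of which there are at most $K \eqdef |\cX| \cdot |\cY|$, a constant independent of $n$), of the product events $\{x : x^j = q,\ f(x)^j = a\} = \prod_i \{x_i : x_i^j = q_i,\ f_i(x_i)^j = a_i\}$; here I crucially use that each coordinate $f_i(x_i)^j$ of $f(x)^j$ depends only on player $i$'s input. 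Intersecting each such piece with the product event $E_t$ partitions $\Win \cap E_t$ into at most $K$ product events; I take $E_{t+1}$ to be one of them (chosen below) and set $C_{t+1} \eqdef C_t \cup \{j\}$. Since $E_{t+1} \subseteq E_t$ pins down a winning transcript in coordinate $j$, the invariant is preserved.

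The choice of piece is the crux of the argument, and the step I expect to be the main obstacle. I want $E_{t+1}$ to simultaneously (i) \emph{boost} the conditional winning probability, $P(\Win \mid E_{t+1}) \ge \tfrac{1}{1-\epsilon/2}\,P(\Win \mid E_t)$, which will cap the number of iterations, and (ii) \emph{retain} a constant fraction of the absolute winning mass, $P(\Win \cap E_{t+1}) \ge \tfrac{\epsilon}{2K}\,P(\Win \cap E_t)$, which will prevent $P(E_t)$ from collapsing too quickly. Let $p_\tau$ and $w_\tau$ denote the $P$-mass and the $\Win$-intersected $P$-mass of the piece indexed by a winning transcript $\tau$, so that $\sum_\tau w_\tau = P(\Win \cap E_t)$, while $\sum_\tau p_\tau = P(E_t)\cdot P\big(f \text{ wins coord } j \mid E_t\big) \le (1-\epsilon) P(E_t)$. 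A short averaging computation then shows that the pieces violating (i) --- those with $w_\tau / p_\tau < \tfrac{1}{1-\epsilon/2} P(\Win \mid E_t)$ --- together carry strictly less than $(1 - \epsilon/2)\,P(\Win \cap E_t)$ of the winning mass, so the pieces satisfying (i) carry more than $\tfrac{\epsilon}{2}\,P(\Win \cap E_t)$, and hence (since there are at most $K$ of them) one carries more than $\tfrac{\epsilon}{2K}\,P(\Win \cap E_t)$; I take that one. By induction this yields $P(\Win \mid E_t) \ge v\,(1-\epsilon/2)^{-t}$ and $P(\Win \cap E_t) \ge v\,(\epsilon/2K)^{t}$ for every $t$ the process reaches.

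It remains to combine these bounds. Since $P(\Win \mid E_t) \le 1$ always, the first bound forces every reached $t$ to satisfy $t \le T_{\max} \eqdef \ln(1/v)\,/\,\ln\!\big(1/(1-\epsilon/2)\big)$, so the process must halt. If it halts with $t = n$, then $C_n = [n]$ forces $P(\Win \mid E_n) = 1$, whence $v \le (1-\epsilon/2)^n$. If it halts at some $T \le n$ with $P(E_T) < \rho(n)$, then $\rho(n) > P(E_T) \ge P(\Win \cap E_T) \ge v\,(\epsilon/2K)^{T} \ge v\,(\epsilon/2K)^{T_{\max}} = v^{\,c}$, where $c \eqdef 1 + \ln(2K/\epsilon)/\ln\!\big(1/(1-\epsilon/2)\big)$ is a positive constant depending only on $\epsilon$ and the alphabet sizes, so $v < \rho(n)^{1/c}$. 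In the first case, since $\rho(n) \ge e^{-O(n)}$, the bound $(1-\epsilon/2)^n$ is itself at most $\rho(n)^{\Omega(1)}$. In all cases $v(\cG^n) \le \rho(n)^{\Omega(1)}$.
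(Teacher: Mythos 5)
Your proposal is correct and follows essentially the same route as the paper: iterated conditioning on question--answer transcripts in adaptively chosen hard coordinates (these stay product events because the strategy is a product function), trading a per-round $(1-\Theta(\epsilon))$ decay of the winning probability against a per-round alphabet-size factor to conclude $v(\cG^n) \le \rho(n)^{\Omega(1)}$. The only difference is bookkeeping --- you greedily follow a single chain of transcript-fixing events with two invariants (conditional-win-probability boost and retained win mass), whereas the paper averages over the full transcript distribution and tracks only the global win probability $w_i$ --- a presentational variant rather than a different method.
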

  \begin{proof}
    Fix any $f = f_1 \times \cdots \times f_k : \cX^n \to \cY^n$.  Consider the
    probability space defined by sampling ${X} \gets P^n$, and let
    ${Y} = f({X})$.  We define additional random variables
    $J_1, \ldots, J_n \in [n]$ and $Z_1, \ldots, Z_n \in \cX \times \cY$ where
    $J_1$ is an arbitrary fixed value, $Z_i \eqdef ({X}^{J_i}, {Y}^{J_i})$ for
    all $i$, and $J_{i + 1}$ depends deterministically on
    ${Z}_{\le i} \eqdef (Z_1, \ldots, Z_{i})$ as follows.  When
    ${Z}_{\le i} = {z}_{\le i}$, $J_{i + 1}$ is defined to be a value
    $j \in [n]$ that minimizes
    $P^n \big ( W(X^j, Y^j) = 1 \big | {Z}_{\le i} = {z}_{\le i} \big )$.  With
    these definitions, each event $\{Z_{\le i} = z_{\le i}\}$ is a product
    event. In particular, if $P^n({Z}_{\le i} = {z}_{\le i}) \ge \rho(n)$ then
    $P^n \big ( W(X^{J_{i+1}}, Y^{J_{i+1}}) = 1 \big | {Z}_{\le i} = {z}_{\le i}
    \big ) \le 1 - \epsilon$.

    Let $\WIN_i$ denote the event that $W(Z_i) = 1$, let $\WIN_{\le i}$ denote the
    event $\WIN_1 \land \cdots \land \WIN_i$, and let $w_i$ denote
    $P^n \big (\WIN_{\le i} \big )$.  Since $\WIN_{\le i}$ is the union of some subset of the
    $|\cX|^i \cdot |\cY|^i$ disjoint product events
    $\{{Z}_{\le i} = {z}_{\le i}\}$, we have
    \[
      \Pr_{{z}_{\le i} \gets P^n_{{Z}_{\le i} | \WIN_{\le i}}} \left [
        P^n({Z}_{\le i} = {z}_{\le i}) \ge \rho(n) \right ] \ge 1 - |\cX|^i \cdot |\cY|^i
      \cdot \frac{\rho(n)}{w_i}.
    \]
    Moreover, for all ${z}_{\le i}$ for which
    $P^n({Z}_{\le i} = {z}_{\le i}) \ge \rho(n)$, we know that
    $P^n \big ( \WIN_{i + 1} \big | {Z}_{\le i} = {z}_{\le i} \big ) \le 1 -
    \epsilon$.  Thus as long as
    $w_i \ge 2 \cdot |\cX|^i \cdot |\cY|^i \cdot \rho(n)$, we have
    \begin{align*}
      w_{i+1} & = w_i \cdot P^n ( \WIN_{i+1} | \WIN_{\le i} ) \\
              & = w_i \cdot \E_{{z}_{\le i} \gets P^n_{{Z}_{\le i} |
                \WIN_{\le i}}} \big [ P^n(\WIN_{i+1} | {Z}_{\le i} =
                {z}_{\le i} )\big ] \\
              & \le w_i \cdot \left (
                \Pr_{{z}_{\le i} \gets P^n_{{Z}_{\le i} |
                \WIN_{\le i}}} \big [ P^n \big ({Z}_{\le i} =
                {z}_{\le i} \big ) < \rho \big ] + \Pr_{{z}_{\le i} \gets P^n_{{Z}_{\le i} |
                \WIN_{\le i}}} \big [ P^n \big ({Z}_{\le i} =
                {z}_{\le i} \big ) \ge \rho \big ] \cdot (1 - \epsilon)
                \right ) \\
              & \le w_i \cdot \left ( \frac{1}{2} + \frac{1}{2} \cdot (1 -
                \epsilon) \right ) \\
              & = w_i \cdot \left ( 1 - \frac{\epsilon}{2} \right )
    \end{align*}
    Iterating this inequality as long as the condition
    $w_i \ge 2 \cdot |\cX|^i \cdot |\cY|^i \cdot \rho(n)$ is satisfied, we find
    $w_{i^\star}$ such that
    $w_{i^\star} \le \min \big ( 2 \cdot |\cX|^{i^\star} \cdot |\cY|^{i^\star}
    \cdot \rho(n), (1 - \frac{\epsilon}{2})^{i^\star} \big )$.  This is
    minimized for $i^\star = \Theta(\log \frac{1}{\rho(n)})$ or $i^\star = n$
    and gives $v(\cG^n) \le w_{i^\star} \le \rho(n)^{\Omega(1)}$.
  \end{proof}

\appendix
\section{Probability Theory}
\label{sec:probability}
We recall the notions of probability theory that we will need.
\begin{definition}
  A \textdef{probability distribution} on a finite set $\Omega$ is a function
  $P : \Omega \to \bbR$ satisfying $P(\omega) \ge 0$ for all $\omega \in \Omega$
  and $\sum_{\omega \in \Omega} P(\omega) = 1$.  We extend the domain of $P$ to
  $2^\Omega$ by writing $P(E)$ to denote $\sum_{\omega \in E} P(\omega)$ for any
  ``event'' $E \subseteq \Omega$.
\end{definition}
\begin{definition}
  The \textdef{support} of $P : \Omega \to \bbR$ is the set
  $\{\omega \in \Omega : P(\omega) > 0\}$.
\end{definition}
\begin{definition}
  A \textdef{$\Sigma$-valued random variable} on a sample space $\Omega$ is a
  function $X : \Omega \to \Sigma$.
\end{definition}
\begin{definition}[Expectations]
  If $P : \Omega \to \bbR$ is a probability distribution and
  $X : \Omega \to \bbR$ is a random variable, the \textdef{expectation of $X$
    under $P$}, denoted $\E_P[X]$, is defined to be
  $\sum_{\omega \in \Omega} P(\omega)\cdot X(\omega)$.
\end{definition}

We refer to subsets of $\Omega$ as \textdef{events}.  We use standard shorthand
for denoting events.  For instance, if $X$ is a $\Sigma$-valued random variable
and $x \in \Sigma$, we write $X = x$ to denote the event
$\{\omega \in \Omega : X(\omega) = x\}$.
\begin{definition}[Indicator Random Variables]
For any event $E$, we write $1_E$ to
denote a random variable defined as
\[
  1_E(\omega) =
  \begin{cases}
    1 & \text{if $\omega \in E$} \\
    0 & \text{otherwise.}
  \end{cases}
\]
\end{definition}
\begin{definition}[Independence]
  Events $E_1, \ldots, E_k \subseteq \Omega$ are said to be
  \textdef{independent} under a probability distribution $P$ if
  $P(E_1 \cap \cdots \cap E_k) = \prod_{i \in [k]} P(E_i)$.  Random variables
  $X_1, \ldots, X_k$ are said to be independent if the events
  $X_1 = x_1, \ldots, X_k = x_k$ are independent for any choice of
  $x_1, \ldots, x_k$.
\end{definition}

\begin{definition}[Conditional Probabilities]
  If $P : \Omega \to \bbR$ is a probability distribution and
  $E \subseteq \Omega$ is an event with $P(E) > 0$, then the
  \textdef{conditional distribution of $P$ given $E$} is denoted
  $(P | E) : \Omega \to \bbR$ and is defined to be
\[
  (P|E)(\omega) = \begin{cases}
    P(\omega) / P(E) & \text{if $\omega \in E$} \\
    0 & \text{otherwise.}
  \end{cases}
\]
\end{definition}

If $X$ is a random variable and $P$ is a probability distribution, we write
$P_X$ to denote the induced distribution of $X$ under $P$.  That is, $P_X(x) =
P(X = x)$.

If $E$ is an event, we write $P_{X|E}$ as shorthand for $(P | E)_X$.
%


\begin{definition}[Entropy]\label{def:entropy}
  If $P : \Omega \to \bbR$ is a probability distribution, the entropy (in nats)
  of $P$ is
  \[
    H(P) \eqdef -\sum_{\omega \in \Omega} P(\omega) \cdot \ln \big ( P(\omega) \big ).
  \]
  When $X$ is a random variable associated with a probability distribution $P$,
  we sometimes write $H(X)$ as shorthand for $H(P_X)$.
\end{definition}
\begin{definition}[Conditional Entropy]\label{def:cond-entropy}
  If $P$ is a probability measure with random variables $X$ and $Y$, we write
  \[
    H(P_{X | Y}) \eqdef \E_{y \gets P_Y} \left [ H(P_{X | Y = y}) \right ].
  \]
\end{definition}



\begin{fact}[Chain Rule of Conditional Entropy]
  For any probability measure $P$ and any random variables $X$, $Y$, it holds that
  \[
    H(P_{X|Y}) = H(P_{X,Y}) - H(P_{Y}).
  \]
\end{fact}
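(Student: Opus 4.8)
The plan is to unfold both sides directly from the definitions, with no cleverness required. Starting from the left-hand side, \cref{def:cond-entropy} gives $H(P_{X|Y}) = \sum_y P_Y(y)\, H(P_{X|Y=y})$, where the sum ranges over $y$ in the support of $P_Y$ (so that each conditional distribution $P_{X|Y=y}$ is well-defined). Applying \cref{def:entropy} to each such conditional distribution and substituting $P_{X|Y=y}(x) = P_{X,Y}(x,y)/P_Y(y)$, which is legitimate precisely because $P_Y(y)>0$, I would obtain
\[
  H(P_{X|Y}) = -\sum_{y}\sum_{x} P_Y(y)\cdot \frac{P_{X,Y}(x,y)}{P_Y(y)}\,\ln \frac{P_{X,Y}(x,y)}{P_Y(y)} = -\sum_{y}\sum_{x} P_{X,Y}(x,y)\,\ln \frac{P_{X,Y}(x,y)}{P_Y(y)}.
\]
Throughout, the inner sum over $x$ is taken over the support of $P_{X,Y}(\cdot, y)$, so that no $\ln 0$ ever appears; pairs $(x,y)$ with $P_{X,Y}(x,y)=0$ contribute nothing under the convention $0\ln 0 = 0$, and whenever $P_{X,Y}(x,y)>0$ we automatically have $P_Y(y) \ge P_{X,Y}(x,y) > 0$, so every quotient inside a logarithm is positive.

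Next I would split $\ln\frac{P_{X,Y}(x,y)}{P_Y(y)} = \ln P_{X,Y}(x,y) - \ln P_Y(y)$ and separate the double sum into two pieces. The first piece, $-\sum_{x,y} P_{X,Y}(x,y)\,\ln P_{X,Y}(x,y)$, is by \cref{def:entropy} (applied to the joint distribution $P_{X,Y}$) exactly $H(P_{X,Y})$. For the second piece, $+\sum_{x,y} P_{X,Y}(x,y)\,\ln P_Y(y)$, I would perform the sum over $x$ first and use $\sum_{x} P_{X,Y}(x,y) = P_Y(y)$, which collapses it to $\sum_y P_Y(y)\,\ln P_Y(y) = -H(P_Y)$. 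Adding the two contributions yields $H(P_{X|Y}) = H(P_{X,Y}) - H(P_Y)$, as claimed.

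The argument is entirely routine; there is no real obstacle. The only point deserving a word of care is the bookkeeping around zero-probability outcomes, and that is handled by the standard convention $0\ln 0 = 0$ together with the observation, already noted above, that $P_{X,Y}(x,y)>0$ forces $P_Y(y)>0$, so all conditional probabilities and all logarithms in the computation are well-defined.
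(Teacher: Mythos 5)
Your proof is correct: the paper states this chain rule as a standard fact without giving a proof, and your argument — unfolding \cref{def:cond-entropy} and \cref{def:entropy}, substituting $P_{X|Y=y}(x) = P_{X,Y}(x,y)/P_Y(y)$, splitting the logarithm, and marginalizing over $x$ — is exactly the routine verification intended, with the zero-probability bookkeeping handled properly via the $0\ln 0 = 0$ convention and the observation that $P_{X,Y}(x,y)>0$ forces $P_Y(y)>0$.
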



\subsection{Divergences}
\begin{definition}[Total Variational Distance]\label{def:dtv}
  If $P, Q : \Omega \to \bbR$ are two probability distributions, then the
  \textdef{total variational distance between $P$ and $Q$}, denoted
  $\dtv(P, Q)$, is
  \[
    \dtv(P, Q) \eqdef \max_{E \subseteq \Omega} \Big | P(E) - Q(E) \Big |.
  \]
  An equivalent definition is
  \[
    \dtv(P, Q) \eqdef \frac{1}{2} \sum_{\omega \in \Omega} \big | P(\omega) -
    Q(\omega) \big |
  \]
\end{definition}

\begin{definition}[Kullback-Leibler (KL) Divergence]
  If $P, Q : \Omega \to \bbR$ are probability distributions, the
  \textdef{Kullback-Leibler divergence} of $P$ from $Q$ is
  \[
    \dkl(P\|Q) \eqdef \sum_{\omega \in \Omega} P(\omega) \ln \left (
      \frac{P(\omega)}{Q(\omega)} \right ),
  \]
  where terms of the form $p \cdot \ln(p / 0)$ are treated as $0$ if $p = 0$ and
  $+\infty$ otherwise, and terms of the form $0 \cdot \ln(0 / q)$ are treated as
  $0$.
\end{definition}
The following relation between total variational distance and Kullback-Leiber
divergence, known as Pinsker's inequality, is of fundamental importance.
\begin{theorem}[Pinsker's Inequality]
  \label{fact:pinsker}
  For any probability distributions $P, Q : \Omega \to \bbR$, it holds that
  $\dtv(P, Q) \le \sqrt{\frac{1}{2} \dkl(P\|Q)}$.
\end{theorem}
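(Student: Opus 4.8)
The plan is to reduce to the two-point (Bernoulli) case and then verify a one-variable inequality by elementary calculus. First I would set $A \eqdef \{\omega \in \Omega : P(\omega) > Q(\omega)\}$. By the second formula for total variational distance in \cref{def:dtv} together with $\sum_\omega (P(\omega) - Q(\omega)) = 0$, the maximizing event is $A$, so $\dtv(P,Q) = P(A) - Q(A)$. Write $p \eqdef P(A)$ and $q \eqdef Q(A)$, so that $\dtv(P,Q) = p - q$. The crux of the reduction is the ``data-processing'' bound
\[
  \dkl(P\|Q) \ \ge\ p\ln\frac{p}{q} + (1-p)\ln\frac{1-p}{1-q},
\]
i.e.\ that KL divergence does not increase under the coarsening induced by the partition $\{A,\, \Omega \setminus A\}$. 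I would obtain this from the log-sum inequality $\sum_i a_i \ln(a_i/b_i) \ge \big(\sum_i a_i\big)\ln\big(\sum_i a_i \big/ \sum_i b_i\big)$ (itself just convexity of $t \mapsto t\ln t$, equivalently Jensen), applied once to the sums over $\omega \in A$ and once to the sums over $\omega \notin A$, and then added. Degenerate cases — $q = 0$ with $p > 0$, or $p \in \{0,1\}$ — either force the right-hand side (hence $\dkl(P\|Q)$) to be $+\infty$ or kill a term, and are disposed of directly.

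It then remains to prove the scalar inequality: for all $0 \le q \le p \le 1$,
\[
  g(q) \ \eqdef\ p\ln\frac{p}{q} + (1-p)\ln\frac{1-p}{1-q} - 2(p-q)^2 \ \ge\ 0 .
\]
I would fix $p$ and view $g$ as a function of $q$ on $(0,p]$, noting $g(p) = 0$. Differentiating,
\[
  g'(q) \ =\ -\frac{p}{q} + \frac{1-p}{1-q} + 4(p-q) \ =\ \frac{q-p}{q(1-q)} + 4(p-q) \ =\ (p-q)\left(4 - \frac{1}{q(1-q)}\right).
\]
Since $q(1-q) \le \tfrac14$ for $q \in [0,1]$, the second factor is $\le 0$; since $q \le p$, the first factor $p - q$ is $\ge 0$; hence $g'(q) \le 0$ on $(0,p]$. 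Thus $g$ is non-increasing there, so $g(q) \ge g(p) = 0$ for every $q \le p$. Combined with the previous paragraph this gives $2\,\dtv(P,Q)^2 = 2(p-q)^2 \le \dkl(P\|Q)$, i.e.\ $\dtv(P,Q) \le \sqrt{\tfrac12\,\dkl(P\|Q)}$, as claimed.

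I do not expect a genuine obstacle — this is the classical argument. The only points wanting care are (i) the coarsening/log-sum step, which I would justify from convexity rather than quote as a black box, since the rest of the proof really hinges on it; and (ii) the boundary bookkeeping when $q = 0$ or $p \in \{0,1\}$, where certain logarithms are infinite or certain terms vanish and the inequality is immediate.
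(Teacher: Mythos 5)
Your proof is correct: the reduction to the two-point case via the set $A = \{\omega : P(\omega) > Q(\omega)\}$ and the log-sum (data-processing) inequality, followed by the monotonicity argument for $g(q) = p\ln\frac{p}{q} + (1-p)\ln\frac{1-p}{1-q} - 2(p-q)^2$, is the standard argument, and your derivative computation and handling of the degenerate cases ($q=0$, $p \in \{0,1\}$) are sound, with the constant $\tfrac12$ matching the paper's convention of KL divergence in nats. Note that the paper itself states Pinsker's inequality (\cref{fact:pinsker}) as a known fact without proof, so there is no in-paper argument to compare against; your write-up simply supplies the classical proof of this imported result.
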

\begin{definition}[Conditional KL Divergence]
  If $P, Q : \Omega \to \bbR$ are probability distributions and if $W$, $X$, $Y$,
  and $Z$ are
  random variables on $\Omega$, we write
  \[
    \dkl(P_{W | X} \| Q_{Y|Z}) \eqdef \E_{x \gets P_X} \left [ \dkl(P_{W | X =
        x} \| Q_{Y | Z = x} ) \right ],
  \]
  which is taken to be $+\infty$ if there exists $x$ with $P_X(x) > 0$ but
  $Q_Z(x) = 0$.
\end{definition}
KL divergence obeys a chain rule analogous to that for entropy.
\begin{fact}[Chain Rule for KL Divergence]
  If $P, Q : \Omega \to \bbR$ are probability distributions and $W,X,Y,Z$ are
  random variables on $\Omega$, then
  \[
    \dkl(P_{W,X}\|Q_{Y,Z}) = \dkl(P_X \| Q_Z) + \dkl(P_{W|X} \| P_{Y|Z}).
  \]
\end{fact}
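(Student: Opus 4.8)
The plan is to prove the identity by unfolding the definition of KL divergence and of the joint distributions and then regrouping the (finite) sum; the computation is routine, and the only real care is needed for the degenerate cases built into the conventions $p\ln(p/0)=+\infty$ (for $p>0$) and $0\cdot\ln(\cdot)=0$. I would first note the implicit hypothesis that $W,Y$ take values in a common set and $X,Z$ in a common set, so that $P_{W,X}$ and $Q_{Y,Z}$ are distributions on the same finite space. I would also read the $P_{Y|Z}$ appearing in the statement as a typo for $Q_{Y|Z}$, consistent with the definition of conditional KL divergence given just above it.

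\textbf{Main computation.} Factor each joint through its marginal on the second coordinate, $P_{W,X}(w,x)=P_X(x)\cdot P_{W|X=x}(w)$ and $Q_{Y,Z}(w,x)=Q_Z(x)\cdot Q_{Y|Z=x}(w)$, substitute these into $\dkl(P_{W,X}\|Q_{Y,Z})=\sum_{w,x}P_{W,X}(w,x)\ln\!\big(P_{W,X}(w,x)/Q_{Y,Z}(w,x)\big)$, and apply $\ln(ab)=\ln a+\ln b$ to the ratio to obtain
\[
  \dkl(P_{W,X}\|Q_{Y,Z})=\sum_{w,x}P_X(x)P_{W|X=x}(w)\left(\ln\frac{P_X(x)}{Q_Z(x)}+\ln\frac{P_{W|X=x}(w)}{Q_{Y|Z=x}(w)}\right).
\]
Then I would split into the two summands. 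In the first, carrying out the $w$-summation first uses $\sum_w P_{W|X=x}(w)=1$ to collapse it to $\sum_x P_X(x)\ln\!\big(P_X(x)/Q_Z(x)\big)=\dkl(P_X\|Q_Z)$. In the second, the inner $w$-sum is by definition $\dkl(P_{W|X=x}\|Q_{Y|Z=x})$, so the whole term equals $\E_{x\gets P_X}\!\big[\dkl(P_{W|X=x}\|Q_{Y|Z=x})\big]=\dkl(P_{W|X}\|Q_{Y|Z})$. Adding the two pieces gives the claim.

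\textbf{Main obstacle.} The one step I would actually be careful about is justifying that this rearrangement is valid as an identity in $[0,+\infty]$; since $\Omega$ is finite there are no convergence issues, so this reduces to matching up the $+\infty$ cases. The left side equals $+\infty$ exactly when some $(w,x)$ has $P_{W,X}(w,x)>0=Q_{Y,Z}(w,x)$, and since $Q_{Y,Z}(w,x)=Q_Z(x)Q_{Y|Z=x}(w)$ this happens iff either (a) some $x$ has $P_X(x)>0=Q_Z(x)$, forcing $\dkl(P_X\|Q_Z)=+\infty$, or (b) some $x$ with $P_X(x),Q_Z(x)>0$ has a $w$ with $P_{W|X=x}(w)>0=Q_{Y|Z=x}(w)$, forcing $\dkl(P_{W|X=x}\|Q_{Y|Z=x})=+\infty$ and hence $\dkl(P_{W|X}\|Q_{Y|Z})=+\infty$; in either direction both sides are $+\infty$, and the converse implications are immediate. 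When all three divergences are finite, every term with $P_{W,X}(w,x)=0$ contributes $0$ on both sides by the $0\cdot\ln(\cdot)=0$ convention, the surviving ratios are strictly positive, and the termwise logarithm identity together with reindexing the finite sum completes the proof.
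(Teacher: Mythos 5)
Your proof is correct: the paper states this chain rule as a standard fact without providing a proof, so there is nothing to compare against beyond noting that your argument (factor the joint distributions, split the logarithm, regroup the finite sums, and match the $+\infty$ cases arising from the conventions) is the standard one and is carried out carefully, including the right reading of the typo $P_{Y|Z}$ as $Q_{Y|Z}$ in accordance with the paper's definition of conditional KL divergence. No gaps.
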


\subsection{Conditional KL Divergence}
\begin{fact}
  If $P : \Omega \to \bbR$ is a probability distribution and $E \subseteq \Omega$ is an event, then
  \[
    \dkl \big (P |E \big \| P \big ) = \ln \left ( \frac{1}{P(E)} \right ).
  \]
\end{fact}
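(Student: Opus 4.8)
The plan is simply to unfold the two definitions involved and simplify the resulting sum, so this will be a short, routine computation rather than anything requiring a clever idea. First I would note that $\dkl(P|E \,\|\, P)$ is only meaningful when $P(E) > 0$ (otherwise $P|E$ is undefined), so I assume this throughout. By the definition of the conditional distribution, $(P|E)(\omega)$ equals $P(\omega)/P(E)$ for $\omega \in E$ and equals $0$ for $\omega \notin E$.

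Next I would substitute this into the definition of KL divergence. The terms with $\omega \notin E$ have $(P|E)(\omega) = 0$, hence contribute $0$ by the stated convention for $0 \cdot \ln(0/q)$; similarly any $\omega \in E$ with $P(\omega) = 0$ already has $(P|E)(\omega) = 0$ and contributes $0$. So it remains to evaluate $\sum_{\omega \in E} \frac{P(\omega)}{P(E)} \ln\!\big( \frac{P(\omega)/P(E)}{P(\omega)} \big)$, in which the ratio inside the logarithm simplifies to $1/P(E)$, a quantity independent of $\omega$.

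Pulling this constant out of the sum leaves $\ln(1/P(E)) \cdot \sum_{\omega \in E} \frac{P(\omega)}{P(E)}$, and the remaining sum is $P(E)/P(E) = 1$ since $\sum_{\omega \in E} P(\omega) = P(E)$. This yields exactly $\ln(1/P(E))$, as claimed. There is no genuine obstacle here; the only point requiring a moment's care is bookkeeping the degenerate terms (zero-probability outcomes, and outcomes outside $E$) and checking that the conventions adopted in the definition of $\dkl$ make all of them vanish.
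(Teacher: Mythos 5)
Your computation is correct and complete: the paper states this fact without proof, and your direct unfolding of the definitions (with the careful handling of the zero-probability and outside-$E$ terms via the stated conventions) is exactly the routine argument the authors implicitly rely on. Nothing is missing.
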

\begin{fact}
  \label{fact:conditional-kl}
  Let $P, Q : \Omega \to \bbR$ be probability distributions and let $X$, $Y$ be
  random variables on $\Omega$ with $Y$ a function of $X$.  Then
  \[
     \dkl ( P_{X|Y} \| Q_{X | Y}) \big ] \le
      \dkl(P_X \| Q_X).
  \]
\end{fact}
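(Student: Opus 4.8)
The plan is to deduce this from the chain rule for KL divergence together with the non-negativity of KL divergence. Recall that the conditional divergence here means $\dkl(P_{X|Y}\|Q_{X|Y}) = \E_{y \gets P_Y}\big[\dkl(P_{X|Y=y}\|Q_{X|Y=y})\big]$, and that the chain rule stated above gives $\dkl(P_{X,Y}\|Q_{X,Y}) = \dkl(P_Y\|Q_Y) + \dkl(P_{X|Y}\|Q_{X|Y})$. So it suffices to show that $\dkl(P_{X,Y}\|Q_{X,Y}) = \dkl(P_X\|Q_X)$ and that $\dkl(P_Y\|Q_Y) \ge 0$.

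First I would use that $Y$ is a fixed, distribution-independent function of $X$ — say $Y = g(X)$ as functions on $\Omega$ — so that for any probability distribution $R$ on $\Omega$ the induced joint law satisfies $R_{X,Y}(x,y) = R_X(x)$ when $y = g(x)$ and $R_{X,Y}(x,y) = 0$ otherwise. Plugging this into the definition of $\dkl$, the outer sum over $(x,y)$ collapses to the single term $y = g(x)$ for each $x$, yielding $\dkl(P_{X,Y}\|Q_{X,Y}) = \dkl(P_X\|Q_X)$. (If there is an $x$ with $P_X(x) > 0$ and $Q_X(x) = 0$ then both this quantity and $\dkl(P_X\|Q_X)$ are $+\infty$ and the claimed inequality is immediate, so I may assume $\Supp(P_X) \subseteq \Supp(Q_X)$; note this also forces $\Supp(P_Y) \subseteq \Supp(Q_Y)$, so the conditional divergence is well-defined in the sense required by its definition.) Combining this with the chain rule gives $\dkl(P_X\|Q_X) = \dkl(P_Y\|Q_Y) + \dkl(P_{X|Y}\|Q_{X|Y})$, and discarding the non-negative term $\dkl(P_Y\|Q_Y) \ge 0$ (Gibbs' inequality, i.e. Jensen's inequality applied to the convex function $-\ln$) finishes the proof.

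I do not expect a genuine obstacle here; the only care needed is in the bookkeeping of the degenerate cases with zero probabilities and the associated $+\infty$ conventions in the definition of $\dkl$, which is handled above by observing that whenever $\Supp(P_X) \not\subseteq \Supp(Q_X)$ the inequality holds trivially, and otherwise everything stays finite.
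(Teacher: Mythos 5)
Your proposal is correct and follows essentially the same route as the paper's own proof: apply the chain rule for KL divergence, observe that $\dkl(P_{X,Y}\|Q_{X,Y}) = \dkl(P_X\|Q_X)$ because $Y$ is a function of $X$, and drop the non-negative term $\dkl(P_Y\|Q_Y)$. Your additional bookkeeping of the $+\infty$ and support-containment cases is fine (the paper omits it) and does not change the argument.
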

\begin{proof}
  This is well known, but for completeness:
  \begin{align*}
    \dkl ( P_{X
    | Y} \| Q_{X | Y})
    & = \dkl(P_{X,Y}\|Q_{X,Y}) -
      \dkl(P_Y\|Q_Y) && \text{(chain rule)} \\
    & = \dkl(P_{X} \| Q_X) - \dkl(P_Y \|Q_Y) && \text{($Y$ is a function of $X$)} \\
    & \le \dkl(P_X \| Q_X). && \text{(non-negativity of KL)} \qedhere
  \end{align*}
\end{proof}

\subsection{Conditional Statistical Distance}
\begin{fact}\label{fact:conditioned-dtv}
  Let $P, Q : \Omega \to \bbR$ be probability distributions, and let
  $E \subseteq \Omega$ be an arbitrary event.  Then
  \[
    \dtv(P|E, Q|E) \le \frac{2 \cdot \dtv(P,Q)}{P(E)}.
  \]
\end{fact}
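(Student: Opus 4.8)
The plan is to argue directly from the definition of total variational distance, using the characterization $\dtv(P,Q) = \max_{F \subseteq \Omega} |P(F) - Q(F)|$ from \cref{def:dtv}. Since $P|E$ and $Q|E$ are both supported on $E$, for any event $F$ we have $(P|E)(F) = (P|E)(F \cap E)$ and likewise for $Q$, so it suffices to bound $\big| (P|E)(F) - (Q|E)(F) \big|$ over events $F \subseteq E$. Write $p = P(E)$ and $q = Q(E)$; we may assume $q > 0$, since otherwise $\dtv(P,Q) \ge |P(E) - Q(E)| = P(E)$ and the claimed right-hand side is at least $2$, making the inequality trivial. For $F \subseteq E$ we then have $(P|E)(F) = P(F)/p$ and $(Q|E)(F) = Q(F)/q$.

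The key step is an algebraic rearrangement over the common denominator $pq$, adding and subtracting $q\,Q(F)$ in the numerator:
\[
  \frac{P(F)}{p} - \frac{Q(F)}{q}
  = \frac{q\,P(F) - p\,Q(F)}{pq}
  = \frac{q\big(P(F) - Q(F)\big) + (q - p)\,Q(F)}{pq}.
\]
Applying the triangle inequality and then the three bounds $|P(F) - Q(F)| \le \dtv(P,Q)$ (taking $F$ itself as the distinguishing event), $|q - p| = |Q(E) - P(E)| \le \dtv(P,Q)$, and $Q(F) \le Q(E) = q$ gives
\[
  \left| \frac{P(F)}{p} - \frac{Q(F)}{q} \right|
  \le \frac{q\,\dtv(P,Q) + q\,\dtv(P,Q)}{pq}
  = \frac{2\,\dtv(P,Q)}{p}.
\]
Taking the maximum over all $F$ yields $\dtv(P|E, Q|E) \le 2\,\dtv(P,Q)/P(E)$, as claimed.

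I do not expect any genuine obstacle: this is a short estimate, and the only points needing a little care are the reduction to events $F \subseteq E$ (which is what makes $Q(F) \le q$ available) and the degenerate case $P(E) = 0$, which is excluded because $P|E$ is only defined when $P(E) > 0$. As an alternative, one could carry out the same manipulation on the $\frac{1}{2}\sum_{\omega \in \Omega} |\,\cdot\,|$ form of total variational distance, summing $|q\,P(\omega) - p\,Q(\omega)|$ over $\omega \in E$ and using $\sum_{\omega \in E} Q(\omega) = q$; this reaches the same conclusion (in fact with the slightly better constant $\frac{3}{2}$), but the event-based computation above is the cleanest route to the stated inequality.
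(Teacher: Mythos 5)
Your proof is correct and is essentially the paper's argument: the paper proves the same inequality by contradiction, multiplying through by $P(E)$ and using exactly the same three bounds ($|P(F)-Q(F)|\le \dtv(P,Q)$, $|P(E)-Q(E)|\le \dtv(P,Q)$, and $(Q|E)(F)\le 1$, which is your $Q(F)\le q$). Your direct common-denominator computation is just the contrapositive presentation of that same decomposition, so there is nothing substantively different to flag.
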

\begin{proof}
  Suppose for the sake of contradiction that for some $A \subseteq E$, we
  have
  \[
    \left | (P|E)(A) - (Q|E)(A) \right | > \frac{2 \dtv(P,Q)}{P(E)}.
  \]
  Multiplying on both sides by $P(E)$, we obtain
  \[
    \left | P(A) - P(E) \cdot (Q | E)(A) \right | > 2 \dtv(P,Q).
  \]
  Since $|P(E) - Q(E)| \le \dtv(P,Q)$ and $(Q|E)(A) \le 1$, we have
  \[
    \left | P(A) - Q(A) \right | > \dtv(P, Q),
  \]
  which is a contradiction.
\end{proof}

\begin{corollary}
  \label{fact:expectation-quotients}
  Let $P : \Omega \to \bbR$ be a probability distribution, let $X$, $Y$ and $Z$
  be random variables on $\Omega$, and let $E$ be an event such that
  $\Pr_{z \gets P_Z} [ P(E | Z = z) \ge \delta ] \ge 1 - \tau$, and let
  $\tilde{P}$ denote $P | E$.  Then
  \[ \E_{z \gets P_Z} \big [ \dtv(\tilde{P}_{X | Z = z}, \tilde{P}_{Y | Z=z}) \big ] \le \tau +
    \frac{2 \cdot \E_{z \gets P_Z} \big [ \dtv(P_{X | Z = z}, P_{Y | Z=z}) \big ]}{\delta}.
  \]
\end{corollary}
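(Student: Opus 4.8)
The plan is to handle each value $z$ in the range of $Z$ separately and then average, splitting the values of $z$ according to how likely $E$ is after conditioning on $\{Z=z\}$. Call $z$ \emph{heavy} if $P(E \mid Z = z) \ge \delta$; the hypothesis is exactly that $\Pr_{z \gets P_Z}[z \text{ is heavy}] \ge 1 - \tau$. The light (i.e.\ non-heavy) values of $z$ contribute almost nothing: for every $z$ the quantity $\dtv(\tilde P_{X \mid Z = z}, \tilde P_{Y \mid Z = z})$ is a total variational distance and hence at most $1$ (interpreting it as $0$ when $\tilde P_Z(z) = 0$, which forces $z$ to be light as long as $\delta > 0$), so the light $z$'s contribute at most $\Pr_{z \gets P_Z}[z \text{ is light}] \le \tau$ to $\E_{z \gets P_Z}[\dtv(\tilde P_{X \mid Z = z}, \tilde P_{Y \mid Z = z})]$.

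For a heavy $z$ I would invoke \cref{fact:conditioned-dtv}. The key point is that conditioning $P$ on $\{Z = z\}$ and then on $E$ yields the same distribution as conditioning on $E$ and then on $\{Z = z\}$, so $\tilde P_{X \mid Z = z}$ and $\tilde P_{Y \mid Z = z}$ arise from $P_{X \mid Z = z}$ and $P_{Y \mid Z = z}$ by further conditioning on $E$, an event whose probability under $P \mid \{Z = z\}$ is $P(E \mid Z = z) \ge \delta$. Applying \cref{fact:conditioned-dtv} with these conditional distributions and this event therefore gives
\[
  \dtv\big( \tilde P_{X \mid Z = z},\, \tilde P_{Y \mid Z = z} \big) \;\le\; \frac{2 \cdot \dtv\big( P_{X \mid Z = z},\, P_{Y \mid Z = z} \big)}{P(E \mid Z = z)} \;\le\; \frac{2}{\delta} \cdot \dtv\big( P_{X \mid Z = z},\, P_{Y \mid Z = z} \big).
\]

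Assembling the two estimates then finishes the argument: writing $\E_{z \gets P_Z}[\dtv(\tilde P_{X \mid Z = z}, \tilde P_{Y \mid Z = z})]$ as $\sum_z P_Z(z)\,\dtv(\tilde P_{X \mid Z = z}, \tilde P_{Y \mid Z = z})$, splitting the sum into its heavy and light parts, bounding the light part by $\tau$ as above and each heavy summand by $\tfrac{2}{\delta}\dtv(P_{X \mid Z = z}, P_{Y \mid Z = z})$, and finally enlarging the heavy sum to a sum over all $z$ (legitimate since every summand is nonnegative), one obtains
\[
  \E_{z \gets P_Z}\big[\dtv(\tilde P_{X \mid Z = z}, \tilde P_{Y \mid Z = z})\big] \;\le\; \tau + \frac{2 \cdot \E_{z \gets P_Z}\big[\dtv(P_{X \mid Z = z}, P_{Y \mid Z = z})\big]}{\delta},
\]
which is the claim. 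The step I expect to require the most care — and the main obstacle — is the per-$z$ reduction: one has to be sure that conditioning on $E$ genuinely commutes with conditioning on $\{Z = z\}$ and with passing to the laws of $X$ and of $Y$, so that \cref{fact:conditioned-dtv} applies with exactly the event-probability $P(E \mid Z = z)$ in the denominator (which is where heaviness of $z$ enters). Everything surrounding that step is a routine split-and-average computation; one can also record, if desired, the identity $\E_{z \gets P_Z}[\dtv(P_{X\mid Z=z},P_{Y\mid Z=z})] = \dtv(P_{X,Z},P_{Y,Z})$ to reinterpret the right-hand side in terms of the joint laws of $(X,Z)$ and $(Y,Z)$.
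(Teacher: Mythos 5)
Your argument is correct and coincides with the paper's own proof: the paper performs the identical split of the expectation according to whether $P(E \mid Z = z) \ge \delta$, bounds the light part by $\tau$ (using $\dtv \le 1$), applies \cref{fact:conditioned-dtv} to each heavy $z$ with denominator $P(E \mid Z = z) \ge \delta$, and then drops the indicator by nonnegativity of the summands. The per-$z$ reduction you single out as the delicate step is carried out in exactly the same (implicit) way in the paper, so your proposal is essentially a line-for-line reconstruction of the paper's argument.
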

\begin{proof}
  \begin{align*}
    & \E_{z \gets P_Z} \big [ \dtv(\tilde{P}_{X | Z = z}, \tilde{P}_{Y | Z=z})
    \big ] \\
    & = \E_{z \gets P_Z} \big [ 1_{P(E | Z = z) < \delta} \cdot
      \dtv(\tilde{P}_{X | Z = z}, \tilde{P}_{Y | Z=z}) + 1_{P(E | Z = z) \ge
      \delta} \cdot \dtv(\tilde{P}_{X | Z = z}, \tilde{P}_{Y | Z=z}) \big ]\\
    & \le \tau + \E_{z \gets P_Z} \big [ 1_{P(E | Z = z) \ge
      \delta} \cdot \dtv(\tilde{P}_{X | Z = z}, \tilde{P}_{Y | Z=z}) \big ] \\
    & \le \tau + \E_{z \gets P_Z} \left [ 1_{P(E | Z = z) \ge
      \delta} \cdot \frac{2 \cdot \dtv(P_{X | Z = z}, P_{Y | Z=z})}{P(E | Z =
      z)} \right ] \\
    & \le \tau +  \frac{2 \cdot \E_{z \gets P_Z} \left [ \dtv(P_{X | Z = z},
      P_{Y | Z=z}) \right ]}{\delta}.  \qedhere
  \end{align*}
\end{proof}

\section{Fourier Analysis}
\label{sec:fourier}
For any (finite) vector space $V$ over $\F_2$, the \textdef{character group of
  $V$}, denoted $\hat{V}$, is the set of group homomorphisms mapping $V$ (viewed
as an additive group) to $\pmone$ (viewed as a multiplicative group).  Each such
homomorphism is called a \textdef{character} of $V$.

We will distinguish the spaces of functions mapping from $V \to \bbR$ and
functions mapping $\hat{V} \to \bbR$ and view them as two different inner product
spaces.  For functions mapping $V \to \bbR$, we define the inner product
\[
  \langle f, g \rangle \eqdef \E_{x \gets V} \left [ f(x) {g(x)} \right ],
\]
and for functions mapping $\hat{V} \to \bbR$, we define the inner product
\[
  \langle \hat{f}, \hat{g} \rangle \eqdef \sum_{\chi \in \hat{\cV}}
  \hat{f}(\chi) \cdot {\hat{g}(\chi)}.
\]
If there is danger of ambiguity, we use
$\hat{\langle} \cdot, \cdot \hat{\rangle}$ to denote the latter inner product,
and $\hat{\|} \cdot \hat{\|}$ to denote its corresponding norm.

\begin{fact}
  Given a choice of basis for $V$, there is a canonical isomorphism between $V$
  and $\hat{V}$.  Specifically, if $V = \F_2^n$, then the characters of $V$ are
  the functions of the form
  \[
    \chi_\gamma(v) = (-1)^{\gamma \cdot v}
  \]
  for $\gamma \in \F_2^n$.
\end{fact}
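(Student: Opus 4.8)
The plan is to prove that the characters of $\F_2^n$ are precisely the maps $\chi_\gamma(v) = (-1)^{\gamma \cdot v}$ for $\gamma \in \F_2^n$, and that the correspondence $\gamma \mapsto \chi_\gamma$ is a bijection (indeed a group isomorphism), once a basis of $V$ has been fixed to identify $V$ with $\F_2^n$. First I would verify that each $\chi_\gamma$ is in fact a character: for $u, v \in \F_2^n$ we have $\gamma \cdot (u + v) = \gamma \cdot u + \gamma \cdot v$ in $\F_2$, and since $(-1)^{a+b} = (-1)^a (-1)^b$ for $a, b \in \F_2$ (the exponent is well-defined because $(-1)^2 = 1$), it follows that $\chi_\gamma(u+v) = \chi_\gamma(u) \chi_\gamma(v)$, so $\chi_\gamma$ is a homomorphism from $(\F_2^n, +)$ to $(\pmone, \times)$.

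Next I would show every character arises this way. Let $\chi : \F_2^n \to \pmone$ be an arbitrary homomorphism, and let $e_1, \ldots, e_n$ be the standard basis vectors. Since $2 e_i = 0$, we get $\chi(e_i)^2 = \chi(0) = 1$, so $\chi(e_i) \in \pmone$ automatically; define $\gamma_i \in \F_2$ by $\chi(e_i) = (-1)^{\gamma_i}$, and set $\gamma = (\gamma_1, \ldots, \gamma_n)$. For any $v = \sum_{i} v_i e_i$, using the homomorphism property repeatedly gives $\chi(v) = \prod_i \chi(e_i)^{v_i} = \prod_i (-1)^{\gamma_i v_i} = (-1)^{\sum_i \gamma_i v_i} = (-1)^{\gamma \cdot v} = \chi_\gamma(v)$. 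Hence $\chi = \chi_\gamma$, so the map $\gamma \mapsto \chi_\gamma$ is surjective onto $\hat{V}$.

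Finally I would check injectivity (so that $|\hat V| = 2^n = |V|$) and, if desired, that the map is a group isomorphism. If $\chi_\gamma = \chi_{\gamma'}$, then evaluating at $e_i$ gives $(-1)^{\gamma_i} = (-1)^{\gamma'_i}$, hence $\gamma_i = \gamma'_i$ for every $i$, so $\gamma = \gamma'$. For the group structure, $\chi_\gamma \cdot \chi_{\gamma'} : v \mapsto (-1)^{\gamma \cdot v} (-1)^{\gamma' \cdot v} = (-1)^{(\gamma + \gamma') \cdot v} = \chi_{\gamma + \gamma'}(v)$, so $\gamma \mapsto \chi_\gamma$ is an isomorphism $(\F_2^n, +) \cong (\hat V, \times)$. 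The underlying general fact — that for a finite vector space $V$ a choice of basis induces such an isomorphism — is then immediate by transporting this computation through the coordinate isomorphism $V \cong \F_2^n$. I do not anticipate a serious obstacle here; the only mild subtlety is being careful that exponents of $-1$ are read modulo $2$, which is exactly why working over $\F_2$ makes the formula $\chi_\gamma(v) = (-1)^{\gamma \cdot v}$ well-defined.
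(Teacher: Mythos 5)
Your proof is correct and complete: verifying that each $\chi_\gamma$ is a homomorphism, recovering $\gamma$ from an arbitrary character by evaluating on a basis, and checking injectivity (plus the group-isomorphism property) is exactly the standard argument for this statement. The paper states this as a Fact without proof, since it is textbook material, so there is nothing to compare against; your write-up, including the care about reading exponents of $-1$ modulo $2$, would serve as a perfectly adequate proof.
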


\begin{definition}
For any function $f : V \to \bbR$, its \textdef{Fourier transform} is the function
$\hat{f} : \hat{V} \to \bbR$ defined by
\[
  \hat{f}(\chi) \eqdef \langle f, \chi \rangle = \E_{x \gets V} \left [ f(x) \chi(x) \right ].
\]
\end{definition}
One can verify that the characters of $V$ are orthonormal.  Together with the
assumption that $V$ is finite, we can deduce that $f$ is equal to
$\sum_{\chi \in \hat{V}} \hat{f}(\chi) \cdot \chi$.

\begin{theorem}[Plancherel]
  \label{thm:plancherel}
  For any $f, g : V \to \bbR$,
  \[
    \langle f, g \rangle = \langle \hat{f}, \hat{g} \rangle.
  \]
\end{theorem}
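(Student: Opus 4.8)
The plan is to derive Plancherel directly from the Fourier inversion formula $f = \sum_{\chi \in \hat{V}} \hat{f}(\chi)\cdot \chi$ (stated just above the theorem) together with the definition $\hat{f}(\chi) = \E_{x \gets V}[f(x)\chi(x)]$ of the Fourier transform. Since $V$ is a finite vector space over $\F_2$, every character takes values in $\pmone \subseteq \bbR$, so all functions in sight are real-valued and there are no conjugation subtleties; finiteness of $V$ also guarantees that every sum over $\hat{V}$ is finite, which legitimizes interchanging such a sum with the (finite) expectation over $V$.

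First I would expand $g$ in the character basis, $g = \sum_{\psi \in \hat{V}} \hat{g}(\psi)\cdot \psi$, and substitute into the definition of the inner product on functions $V \to \bbR$:
\[
  \langle f, g \rangle = \E_{x \gets V}\!\left[ f(x) g(x) \right] = \E_{x \gets V}\!\left[ f(x) \sum_{\psi \in \hat{V}} \hat{g}(\psi)\, \psi(x) \right] = \sum_{\psi \in \hat{V}} \hat{g}(\psi)\, \E_{x \gets V}\!\left[ f(x)\psi(x) \right],
\]
where the last equality pulls the finite sum out of the expectation by linearity. By the definition of the Fourier transform, the inner expectation equals $\hat{f}(\psi)$, so the right-hand side is $\sum_{\psi \in \hat{V}} \hat{f}(\psi)\hat{g}(\psi)$, which is precisely $\langle \hat{f}, \hat{g}\rangle$ under the inner product on functions $\hat{V} \to \bbR$. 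That finishes the argument.

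An equivalent route, if one prefers to avoid invoking inversion for $g$, is to expand both $f$ and $g$ and use orthonormality of the characters: $\langle f, g\rangle = \sum_{\chi,\psi} \hat{f}(\chi)\hat{g}(\psi)\, \E_{x \gets V}[\chi(x)\psi(x)]$, and for $\chi = \chi_\gamma$, $\psi = \chi_{\gamma'}$ one has $\E_{x \gets V}[\chi_\gamma(x)\chi_{\gamma'}(x)] = \E_{x \gets V}[(-1)^{(\gamma+\gamma')\cdot x}]$, which is $1$ when $\gamma = \gamma'$ and $0$ otherwise since a nontrivial character of a finite abelian group sums to zero. This collapses the double sum to $\sum_\chi \hat{f}(\chi)\hat{g}(\chi) = \langle \hat{f}, \hat{g}\rangle$. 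There is no genuine obstacle here; the only two points deserving a line of justification are the orthonormality computation (equivalently, $\sum_{x \in V}(-1)^{\gamma\cdot x} = 0$ for $\gamma \neq 0$) and the interchange of the sum over $\hat{V}$ with the expectation over $V$, both of which are immediate from finiteness.
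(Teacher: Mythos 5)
Your proof is correct, and it uses exactly the ingredients the paper records immediately before the theorem (orthonormality of the characters and the resulting expansion $f = \sum_{\chi \in \hat{V}} \hat{f}(\chi)\cdot\chi$); the paper itself omits the proof as standard, so your argument simply fills in that routine verification. Both of your routes are fine, and the interchange of the finite sum over $\hat{V}$ with the expectation over the finite set $V$ is indeed the only point needing the finiteness hypothesis.
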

An important special case of Plancherel's theorem is Parseval's theorem:
\begin{theorem}[Parseval]
  \label{thm:parseval}
  For any $f : V \to \bbR$,
  \[
    \|f \| = \| \hat{f} \|.
  \]
\end{theorem}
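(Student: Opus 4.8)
The plan is to deduce Parseval's theorem directly from Plancherel's theorem (\cref{thm:plancherel}) by specializing both arguments to the same function. First I would recall that, by definition, the norm $\|\cdot\|$ on the space of real-valued functions on $V$ is the one induced by the inner product $\langle\cdot,\cdot\rangle$ on that space, so that $\|f\|^2 = \langle f, f\rangle$; and likewise the norm $\fpnorm{\cdot}$ on the space of real-valued functions on $\hat V$ satisfies $\fpnorm{\hat f}^2 = \hat{\langle}\hat f, \hat f\hat{\rangle}$. Then I would invoke \cref{thm:plancherel} with both of its arguments equal to $f$, obtaining $\langle f, f\rangle = \hat{\langle}\hat f, \hat f\hat{\rangle}$, and hence $\|f\|^2 = \fpnorm{\hat f}^2$. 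Since both sides are non-negative real numbers, taking square roots completes the argument: $\|f\| = \fpnorm{\hat f}$.

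I do not expect any genuine obstacle here; the statement is essentially a one-line corollary of Plancherel. The only point worth making explicit is the (harmless) overloading of the norm symbol in the statement: on the left $\|\cdot\|$ denotes the norm on functions $V \to \bbR$, whereas the $\|\hat f\|$ on the right really means the norm on functions $\hat V \to \bbR$, i.e. the $\fpnorm{\cdot}$ of the preliminaries. Since Plancherel already matches up the two corresponding inner products, this notational subtlety causes no difficulty. If a self-contained argument avoiding the appeal to Plancherel were preferred, I would instead expand $f = \sum_{\chi \in \hat V} \hat f(\chi)\,\chi$ (using that the characters span the function space), then use orthonormality of the characters of $V$ to compute $\langle f, f\rangle = \sum_{\chi, \chi' \in \hat V} \hat f(\chi)\hat f(\chi')\langle \chi, \chi'\rangle = \sum_{\chi \in \hat V} \hat f(\chi)^2 = \hat{\langle}\hat f, \hat f\hat{\rangle}$, and conclude by taking square roots as before.
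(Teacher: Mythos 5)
Your proposal is correct and matches the paper, which states Parseval precisely as the special case of Plancherel obtained by taking $g = f$ (and then taking square roots of the equal squared norms). The remark about the overloaded norm notation and the alternative direct computation via orthonormality of characters are fine but not needed.
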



\section{Bound on Optimization Problem}
\label{sec:taubound}
Let $W : \bbR^+ \to \bbR^+$ denote the inverse of the function
$x \mapsto x \cdot e^x$ ($W$ is known in the literature as the (principal branch
of the) Lambert W function).
We rely on the following theorem:
\begin{theorem}[{\cite[Corollary 2.4]{HoorfarH00}}]
  \label{importedthm:lambert-w}
  There exists a constant $C$ (in particular,
  $C = \ln \left ( 1 + \frac{1}{e} \right )$ works) such that for all $y \ge e$,
  \[
    W(y) \le \ln y - \ln \ln y + C.
  \]
\end{theorem}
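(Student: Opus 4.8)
The plan is to reduce the claimed bound to an elementary calculus inequality by exploiting the monotonicity of $W$. Recall that $W$ is the inverse of $h(x) \eqdef x e^x$. Since $h'(x) = (1+x)e^x > 0$ for $x \ge 0$, the map $h$ is strictly increasing on $[0,\infty)$ and carries it bijectively onto $[0,\infty)$; hence $W$ is strictly increasing on $[0,\infty)$ and $W(z e^z) = z$ for every $z \ge 0$. Consequently, to prove $W(y) \le z$ for a given $y > 0$ and a given $z \ge 0$, it suffices to verify the single inequality $y \le z e^z$.

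Now fix $y \ge e$, write $t \eqdef \ln y \ge 1$, set $C \eqdef \ln(1 + 1/e)$, and let $z \eqdef \ln y - \ln\ln y + C = t - \ln t + C$. First I would check $z \ge 0$: the function $t \mapsto t - \ln t + C$ has derivative $1 - 1/t \ge 0$ on $[1,\infty)$, hence is minimized at $t = 1$ with value $1 + C > 0$. Next I would compute $e^z = e^{\ln y}\, e^{-\ln\ln y}\, e^C = \dfrac{y e^C}{\ln y}$, so that
\[
  z e^z \;=\; \bigl(\ln y - \ln\ln y + C\bigr)\cdot \frac{y e^C}{\ln y} \;=\; y\, e^C\Bigl(1 - \frac{\ln t - C}{t}\Bigr).
\]
Therefore $y \le z e^z$ is equivalent to $\dfrac{\ln t - C}{t} \le 1 - e^{-C}$, and it remains only to bound $g(t) \eqdef \dfrac{\ln t - C}{t}$ from above.

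Differentiating, $g'(t) = \dfrac{1 + C - \ln t}{t^2}$, which is positive for $t < e^{1+C}$ and negative for $t > e^{1+C}$, so $g$ attains its global maximum over $t > 0$ at $t^\star = e^{1+C} = e\,(1 + 1/e) = e + 1$. There,
\[
  g(t^\star) \;=\; \frac{\ln(e+1) - \ln(1 + 1/e)}{e+1} \;=\; \frac{\ln\!\bigl(\tfrac{e+1}{(e+1)/e}\bigr)}{e+1} \;=\; \frac{\ln e}{e+1} \;=\; \frac{1}{e+1},
\]
whereas $1 - e^{-C} = 1 - \dfrac{1}{1 + 1/e} = \dfrac{1}{e+1}$ as well. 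Hence $g(t) \le 1 - e^{-C}$ for all $t > 0$, which yields $y \le z e^z$, and combined with $z \ge 0$ and the monotonicity of $W$ gives $W(y) \le z = \ln y - \ln\ln y + C$.

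There is no genuine obstacle beyond organizing the reduction correctly; the one place that requires attention is verifying that the stated constant $C = \ln(1 + 1/e)$ is precisely the value making $\max_{t>0} g(t)$ equal $1 - e^{-C}$ — both turn out to be $\tfrac{1}{e+1}$, with equality at $t = e+1$ — so in fact the argument shows $C = \ln(1+1/e)$ is the optimal constant for a bound of this form. (One could instead start from the exact identity $\ln W(y) = \ln y - W(y)$ and bootstrap, but the monotonicity route above is cleaner and entirely self-contained.)
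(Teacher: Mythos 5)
Your proof is correct, and it is worth noting that the paper itself gives no proof of this statement: it is an ``Imported Theorem,'' quoted from Hoorfar--Hassani \cite[Corollary 2.4]{HoorfarH00} and used as a black box. So your argument is not an alternative to anything in the paper; it is a self-contained replacement for the citation. The reduction is sound: since $x \mapsto x e^x$ is increasing on $[0,\infty)$ and $z \eqdef \ln y - \ln\ln y + C \ge 0$ for $y \ge e$ (as you check via the minimum at $t=1$), the inequality $W(y) \le z$ is indeed equivalent to $y \le z e^z$. Your algebra is right: $e^z = y e^C/\ln y$, so the target inequality becomes $\frac{\ln t - C}{t} \le 1 - e^{-C}$ with $t = \ln y$, and the calculus step correctly locates the maximum of $g(t) = \frac{\ln t - C}{t}$ at $t^\star = e^{1+C} = e+1$, where $g(t^\star) = \frac{1}{e+1} = 1 - e^{-C}$. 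The additional observation that equality holds at $t = e+1$ (i.e.\ $y = e^{e+1}$, where $W(y) = e$) correctly shows that $C = \ln(1+1/e)$ is the smallest constant for which a bound of this form holds, matching the optimality claim in the cited source. The only stylistic caution is that your opening phrase ``carries $[0,\infty)$ bijectively onto $[0,\infty)$'' should be read as the statement that $h(x)=xe^x$ is a bijection of $[0,\infty)$ onto itself, with $W$ its inverse on that range; since $y \ge e$ lies in this range and $z \ge 0$, the monotonicity argument applies without needing anything about the branch structure of $W$ for negative arguments.
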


The following corollary is more directly suited to our needs.
\begin{corollary}
  \label{cor:optimum-tau}
  For any $A, B > 0$ satisfying $A \ge eB$,
  \[
    \min_{\tau \in (0,1)} \frac{A}{\ln \left (\frac{1}{\tau} \right )} +
    \frac{B}{\tau} \le \frac{4A}{\ln(A / B)}.
  \]
\end{corollary}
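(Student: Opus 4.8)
The plan is to avoid computing the exact minimizer and instead exhibit one well-chosen value of $\tau$. Substituting $s = \ln(1/\tau)$ lets $s$ range over $(0,\infty)$ and turns the objective into $g(s) = \frac{A}{s} + B e^{s}$; note that the hypothesis $A \ge eB$ is precisely the statement that $L := \ln(A/B) \ge 1$.

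First I would try the ``square-root'' choice $\tau = \sqrt{B/A}$, i.e. $s = L/2$. This is legitimate because $A \ge eB > B$ forces $\tau \in (0,1)$, and for this choice $\frac{A}{\ln(1/\tau)} = \frac{2A}{L}$ exactly, while $\frac{B}{\tau} = \sqrt{AB} = A e^{-L/2}$ (using $B = A e^{-L}$). It therefore suffices to show $A e^{-L/2} \le \frac{2A}{L}$, equivalently $L \le 2 e^{L/2}$, which is immediate: the function $h(L) = 2e^{L/2} - L$ satisfies $h(0) = 2 > 0$ and $h'(L) = e^{L/2} - 1 \ge 0$ for $L \ge 0$. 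Adding the two contributions gives $g(L/2) \le \frac{2A}{L} + \frac{2A}{L} = \frac{4A}{L}$, which is exactly the claimed bound.

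An alternative, which is presumably why the Lambert $W$ bound (\cref{importedthm:lambert-w}) is recorded just above, is to solve $g'(s) = 0$ exactly: this gives $s^2 e^s = A/B$, hence $s^\star = 2W\!\big(\tfrac12\sqrt{A/B}\big)$, and substituting back one gets $g(s^\star) = \frac{A}{s^\star}\big(1 + \tfrac{1}{s^\star}\big)$; one would then lower-bound $s^\star$ in terms of $\ln(A/B)$ to conclude (with a better constant than $4$). I expect this route to be the more delicate one, since it requires a \emph{lower} bound on $W$ whereas the imported theorem supplies an \emph{upper} bound, so the self-contained argument above — whose only non-routine ingredient is the elementary inequality $L \le 2e^{L/2}$ — seems preferable. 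There is no serious obstacle: the single thing to watch is that $g(s) \to \infty$ both as $s \to 0^{+}$ and as $s \to \infty$, so $\tau$ must be kept bounded away from $0$ and from $1$, and the choice $\tau = \sqrt{B/A}$ does this automatically under the hypothesis $A \ge eB$.
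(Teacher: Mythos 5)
Your proof is correct, but it takes a different route from the paper's. You simply exhibit the test point $\tau = \sqrt{B/A}$ (legitimate since $A \ge eB$ forces $\tau \in (0,1)$) and check that each of the two terms is at most $\frac{2A}{L}$ with $L = \ln(A/B)$, the only nontrivial step being the elementary inequality $L \le 2e^{L/2}$; this makes the argument completely self-contained and never touches the Lambert $W$ function. The paper instead balances the two terms (which captures the minimum up to a factor of $2$), so that $z = -\ln\tau$ solves $z e^{z} = A/B$, i.e.\ $z = W(A/B)$, and then bounds $\frac{2A}{W(A/B)} = 2B\,e^{W(A/B)}$ using the imported upper bound $W(y) \le \ln y - \ln\ln y + \ln(1+e^{-1})$ of \cref{importedthm:lambert-w}. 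Note that this answers the concern you raise in your second paragraph: although bounding $\frac{A}{W(A/B)}$ from above looks like it needs a \emph{lower} bound on $W$, the identity $y/W(y) = e^{W(y)}$ converts it into a quantity for which the \emph{upper} bound on $W$ is exactly what is needed. What each approach buys: yours removes the dependence on \cref{importedthm:lambert-w} entirely and is arguably simpler, while the paper's gives the near-optimal choice of $\tau$ and a slightly better constant, $2(1+e^{-1})$ rather than $4$, before rounding up. Since the corollary is only used with the constant $4$, either proof serves the application equally well.
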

\begin{proof}
  The minimum is achieved (up to a factor of two) when
  $\frac{A}{\ln \left (\frac{1}{\tau} \right )} = \frac{B}{\tau}$ because
  $\frac{A}{\ln \left ( \frac{1}{\tau} \right )}$ is monotonically increasing
  with $\tau$ while $\frac{B}{\tau}$ is monotonically decreasing.  Making the
  change of variables $z = - \ln(\tau)$, this is equivalent to
  $z e^z = \frac{A}{B}$, i.e. $z = W ( \frac{A}{B})$.  This choice of $z$ (or
  equivalently $\tau$) gives
  \begin{align*}
    \frac{A}{\ln \left (\frac{1}{\tau} \right )} +
    \frac{B}{\tau}
    & = \frac{2A}{W(A/B)} \\
    & = 2B \cdot \frac{A/B}{W(A/B)} \\
    & = 2B \cdot \exp \big (W ( A/B) \big ) && \text{(Definition of $W$)} \\
    & \le \frac{2A \cdot (1 + e^{-1})}{\ln(A / B)} && \text{(\cref{importedthm:lambert-w})}\\
    & \le \frac{4A}{\ln(A/B)}. && \qedhere
  \end{align*}
\end{proof}


\bibliographystyle{alpha}
\bibliography{refs}

\end{document}